\newcommand{\ft}[1]{{\color{red}{\underline{#1}}}}
\newcommand{\sd}[1]{{\color{green}{#1}}}
\def\algo{Parsa\xspace}
\newcommand{\ps}{parameter server\xspace}
\def\ctrb{\textsf{CTRb}\xspace}
\begin{document}

\title{Graph Partitioning via Parallel Submodular Approximation\\
  to Accelerate Distributed Machine Learning}

\numberofauthors{3}
\author{
\alignauthor
Mu Li\\\vspace{.5ex}
\affaddr{Carnegie Mellon University}\\
\email{muli@cs.cmu.edu}
\alignauthor
Dave G.\ Andersen\\\vspace{.5ex}
\affaddr{Carnegie Mellon University}\\
\email{dga@cs.cmu.edu}
\alignauthor
Alexander J.\ Smola\\\vspace{.5ex}
\affaddr{Carnegie Mellon University}\\
\email{alex@smola.org}
}

\newcommand{\aff}[1]{\raisebox{.6ex}{\small\textsf{#1}}}
\newcommand{\fa}{\aff{1}}
\newcommand{\fb}{\aff{2}}
\newcommand{\fc}{\aff{3}}
\newcommand{\itv}{\hspace{1ex}}

\maketitle

\begin{abstract}
  Distributed computing excels at processing large scale data, but the communication cost
  for synchronizing the shared parameters may slow down the overall
  performance. Fortunately, the interactions between parameter and data in many problems
  are sparse, which admits efficient partition in order to
  reduce the communication overhead.

  In this paper, we formulate data placement as a graph partitioning problem. We
  propose a distributed partitioning algorithm.  We give both theoretical
  guarantees and a highly efficient implementation.  We also provide a highly
  efficient implementation of the algorithm and demonstrate its promising
  results on both text datasets and social networks.  We show that the proposed
  algorithm leads to 1.6x speedup of a state-of-the-start distributed machine
  learning system by eliminating 90\% of the network communication.
\end{abstract}

\section{Introduction}
\label{sec:intro}

The importance of large-scale machine learning continues to grow in concert with
the big data boom, the advances in learning techniques, and the deployment of
systems that enable wider applications. As the amount of data scales up, the
need to harness increasingly large clusters of machines significantly
increases. In this paper, we address a question that is fundamental for applying
today's loosely-coupled ``scale-out'' cluster computing techniques to important
classes of machine learning applications:

\begin{quote}
{How to spread data and model parameters across a cluster of
machines for efficient processing?}
\end{quote}

\begin{figure}[t!]
  \centering
  \noindent%
  \includegraphics[width=\columnwidth]{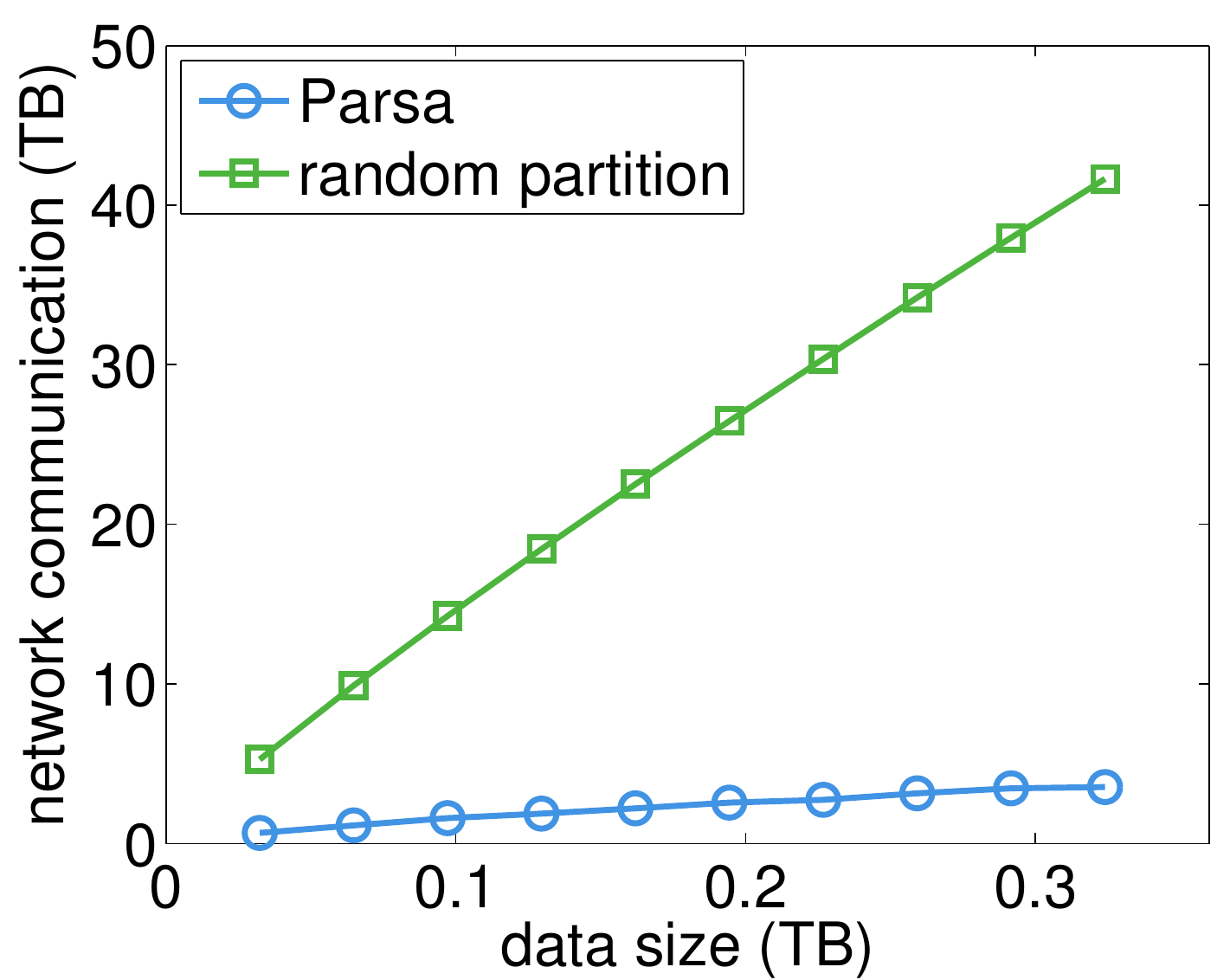}
  \vspace{-1mm}
  \caption{The amount of (outgoing) network traffic versus the size of data in a real text
    dataset.  The algorithm uses 16 machines to run 100 iterations.  The first-order
    gradients are communicated. }
    \label{fig:traffic_vs_datasize}
  \vspace{-3mm}
\end{figure}


One big challenge for large-scale data processing problems is to distribute the data over
processing nodes to fit the computation and storage capacity of each node.
%
For instance, for very large scale graph factorization~\cite{AhmSheNarJosSmo13}, one needs
to partition a natural graph in a way such that the memory, which is required for storing
local state of the partition and caching the adjacent variables, is bounded within the
capacity of each machine.
%
Similar constraints apply to GraphLab~\cite{LowGonKyrBicetal10,GonLowGuBicetal12}, where
vertex-specific updates are carried out while keeping other variables synchronized between
machines.
Likewise, in distributed inference for graphical models with latent
variables~\cite{AhmAlyGonNaretal12,SmoNar10}, the distributed state variables must be
synchronized efficiently between machines.
Furthermore, general purposed distributed machine learning framework such as the
parameter server~\cite{LiAndParSmoetal14, DeaCorMonCheetal12} face similar
issues when it comes to data and parameter layout.

Shared parameters are synchronized via the communication network. The
sheer number of parameters and the iterative nature of machine learning algorithms often
produce huge amounts of network traffic. Figure~\ref{fig:traffic_vs_datasize} shows that,
if we randomly assign data (documents) to machines in a text classification application,
the total amount of network traffic is 100 times larger than the size of training
data. Specifically, almost 4 TB parameters are communicated for 300 GB training
data. Given that the network bandwidth is typically much smaller than the local memory bandwidth,
this traffic volume can potentially become a performance bottleneck.

There are three key challenges in achieving scalability for large-scale data processing
problems:
\begin{description}
\item[Limited computation (CPU) per machine:] therefore we need a well-balanced task
  distribution over machines.
\item[Limited memory (RAM) per machine:] the
  amount of storage per machine available for processing and
  caching model variables is often constrained to a small
  fraction of the total model.
\item[Limited network bandwidth:] the network bandwidth is typically 100 times worse than
  the local memory bandwidth. Thus we need to reduce the amount of communication between
  machines.
\end{description}
One key observation is the sparsity pattern in large scale datasets: most documents
contain only a small fraction of distinct words, and most people have only a few friends
in a social graph. Such nonuniformity and sparsity is both a boon and a challenge for the
problem of dataset partitioning.  Due to its practical importance, even though the dataset
partitioning problems are often NP hard~\cite{StaKli12}, it is still worth seeking
practical solutions that outperform random partitioning, which typically leads to poor
performance.


{\bfseries Our contributions:}
In this paper, we formulate the task of data and parameter placement as a graph
partitioning problem.  We propose \emph{\algo}, a PARallel Submodular Approximation
algorithm for solving this problem, and we analyze its theoretical guarantees.  A
straightforward implementation of the algorithm has running time in the order of $\Ocal(k
|E|^2)$, where $k$ is the number of partitions and $|E|$ is the number of edges in the
graph.  Using an efficient vertex selection data structure, we provide an
efficient implementation with time complexity $\Ocal(k |E|)$.  We also
discuss the techniques including sampling, initialization and parallelization to improve
the partitioning quality and efficiency.

Experiments on text datasets and social networks of various scales show that, on both
partition quality and time efficiency, \algo outperforms state-of-the-art methods,
including METIS \cite{KarKum98}, PaToH \cite{CatAyk99} and Zoltan \cite{Devineetal06}.
\algo can also significantly accelerate the \ps, a state-of-the-art general purpose
distributed machine learning, on data of hundreds GBs size and with billions
parameters.


\section{Graph Partitioning}
\label{sec:graphpart}

In this section, we first introduce the inference problem and the model of dependencies in
distributed inference.  Then we provide the formulation of the data partitioning problem
in distributed inference. We also present a brief overview of related work in the end.

\subsection{Inference in Machine Learning}

In machine learning, many inference problems have graph-structured dependencies. For
instance, in risk minimization~\cite{HasTibFri09}, we strive to solve
\begin{align}
  \label{eq:reg-risk}
  \mini_w\ R[w] := \sum_{i=1}^m l(x_i, y_i, w) + \Omega[w],
\end{align}
where $l(x_i, y_i, w)$ is a loss function measuring the model fitting error in the data
$(x_i,y_i)$, and $\Omega[w]$ is a regularizer on the model parameter $w$. The data and
parameters are often correlated only via the nonzero terms in $x_i$, which exhibit
sparsity patterns in many applications.  For example, in email spam filtering, elements of
$x_i$' correspond to words and attributes in emails, while in computational advertising,
they correspond to words in ads and user behavior patterns.

For undirected graphical models \cite{Besag74,KscFreLoe01}, the joint distribution of the
random variables in logscale can be written as a summation of potential of all the cliques
in the graph, and each clique potential $\psi_C(w_C)$ only depends on the subset of
variables $w_C$ in the clique $C$.

The learning and inference problems in undirected graphical models are often formulated as
an optimization problem in the following form:
\begin{align}
  \label{eq:cliquerisk}
  \mini_w\ R[w] := \sum_{C \in \Ccal} \psi_C(w_C),
\end{align}
where local variables interact through the model parameters $w_C$ of the cliques.

Similar problems occur in the context of inference on natural graphs
\cite{AndGleMir12,GonLowGuBicetal12,AhmSheNarJosSmo13}, where we have sets of interacting
parameters represented by vertices on the graph, and manipulating a vertex affects all of
its neighbors computationally.

\begin{figure}[t!]
  \centering
  \includegraphics[width=\columnwidth]{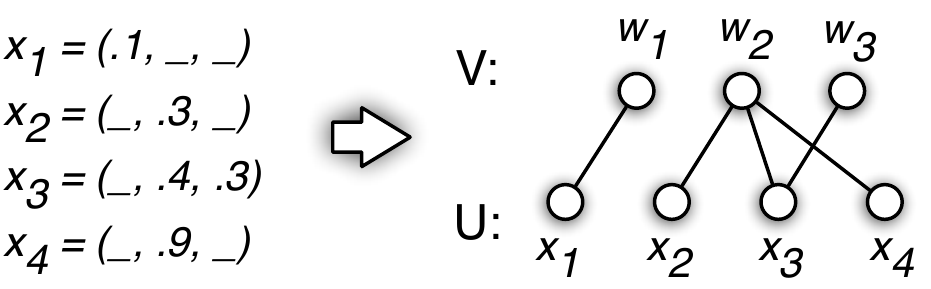} \vspace{-2mm}
  \caption{Modeling dependences as bipartite graph
    \label{fig:dep}}
  \vspace{-4mm}
\end{figure}

\subsection{Bipartite Graphs}

The dependencies in the inference problems above can be modeled by a bipartite graph
$G(U,V,E)$ with vertex sets $U$ and $V$ and edge set $E$.  We denote the edge between two
node $u\in U$ and $v\in V$ by $(u,v)\in E$.
Figure~\ref{fig:dep} illustrates the case of risk minimization \eq{eq:reg-risk}, where $U$
consists of the samples $\cbr{(x_i,y_i)}_{i=1}^m$ and $V$ consists of the parameters in
$w$.  There is an edge $((x_i,y_i), w_j)$ if and only if the $j$-th element of $x_i$ is
non-zero. Therefore, $\cbr{w_j: ((x_i,y_i),w_j)\in E}$ is the working set of elements of
$w$ for evaluating the loss function $l(x_i,y_i,w)$ on the sample $(x_i,y_i)$.

We can construct such bipartite graph $G(U',V,E')$ to encode the dependencies in
undirected graphical models and natural graphs with node set $V$ and edge set $E$.  One
construction is to define $U'=V$, and add an edge $(u, v)$ to the edge set $E'$ if they
are connected in the original graph. An alternative construction is to define the node set
$U'$ to be $\mathcal C$, the set of all cliques of the original graph, and add an edge $(C, v)$
to the edge set $E'$ if node $v$ belongs to the clique $C$ in the original graph.

Throughout the discussion, we refer to $U$ as the set of data (examples) nodes and $V$ as
the set of parameters (results) nodes.



\begin{figure}[t!]
  \centering
  \includegraphics[width=\columnwidth]{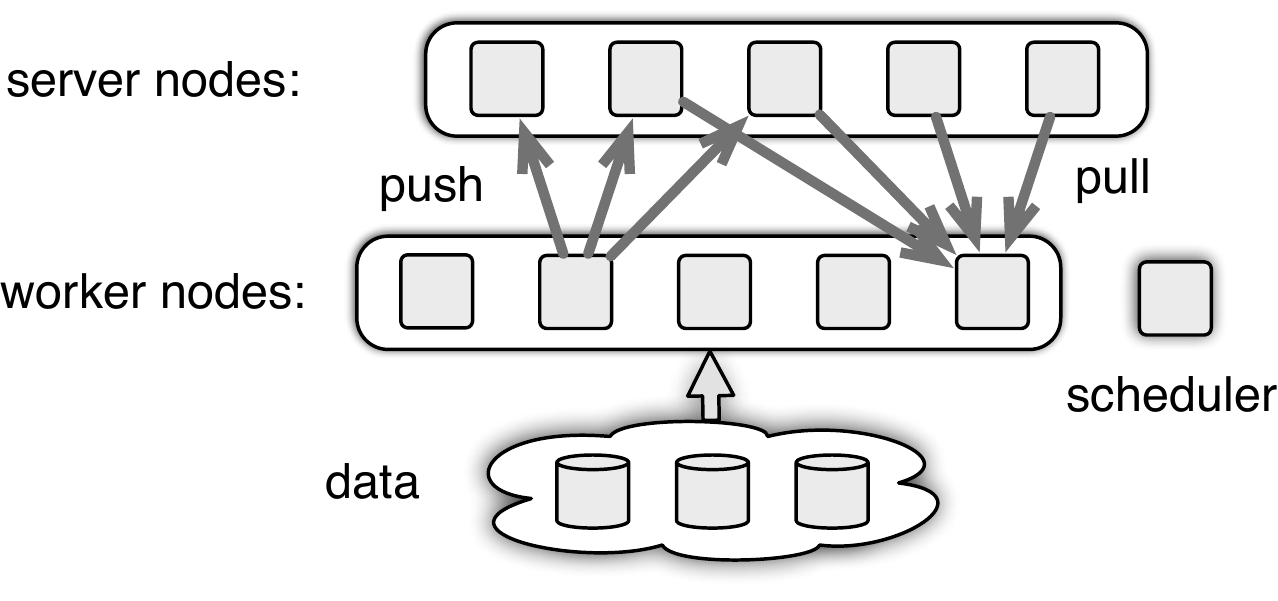}
  \vspace{-1mm}
  \caption{Simplified parameter server architecture.}
  \label{fig:ps_arch}
  \vspace{-3mm}
\end{figure}

\subsection{Distributed Inference}

The challenge for large scale inference is that the size of the optimization problem in
\eqref{eq:cliquerisk} is too large, and even the model $w$ may be too large to be stored
on a single machine.
One solution is to divide exploit the additive form of $R[w]$ to decompose the
optimization into smaller problems, and then employ multiple machines to solve these
sub-problems while keeping the solutions (parameters) consistent.

There exist several frameworks to simplify the developing of efficient distributed
algorithms, such as Hadoop~\cite{Hadoop} and its in-memory variant
Spark~\cite{ZahChoDasDavetal12a} to execute MapReduce programs, and Graphlab for
distributed graph computation~\cite{LowGonKyrBicetal12}. In this paper, we focus on the
\ps framework~\cite{LiAndParSmoetal14}, a high-performance general-purpose distributed
machine learning framework.

In the \ps framework, computational nodes are divided into server nodes and worker nodes,
which are shown in Figure~\ref{fig:ps_arch}.  The globally shared parameters $w$ are
partitioned and stored in the server nodes.  Each worker node solves a sub-problem and
communicates with the server nodes in two ways: to push local results such as gradients or
parameter updates to the servers, and to pull recent parameter (changes) from the
servers. Both { push} and { pull} are executed asynchronously.


\subsection{Multiple Objectives of Partitioning }

\begin{figure}[t!]
  \centering
  \includegraphics[width=\columnwidth]{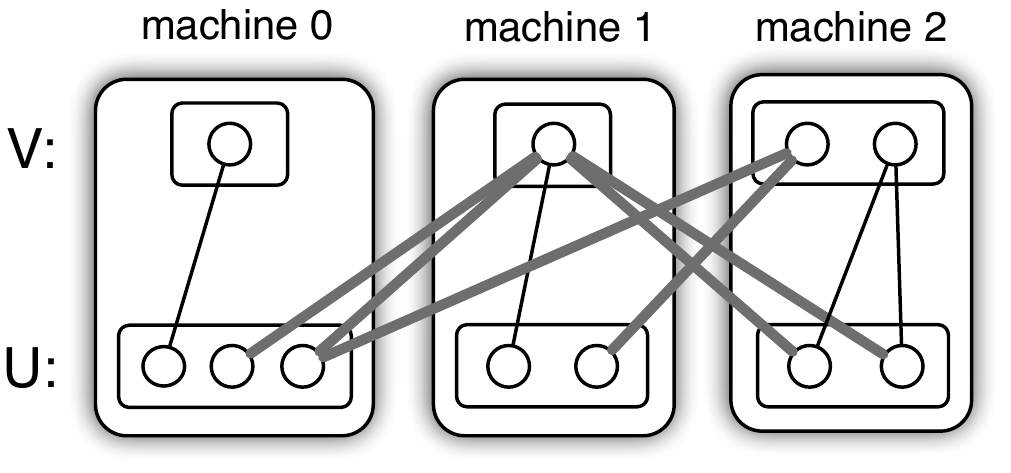}
  \vspace{-2mm}
  \caption{Each machine contains a server and a worker, holding a part
    of $U$ and $V$, respectively. The inter-machine dependencies (edges) are
    highlighted and the communication costs for these three machines are 2, 3,
    and 3, respectively. Moving the 3rd vertex in $V$ to
    either machine 0 or 1 reduces cost.
    \label{fig:ps_partition}}
  \vspace{-4mm}
\end{figure}

In distributed inference, we divide the problem in \eqref{eq:cliquerisk} by partition the
cost function $R[w]$ as well as the associated dependency graph into $k$ blocks.
Without loss of generality we consider a \ps with $k$ server nodes and $k$ worker nodes,
and each machine has exactly one server and one worker (otherwise we can aggregate
multiple nodes in the same machines without affecting the following analysis). For the
bipartite dependency graph $G(U,V,E)$, we partition the parameter set $V$ into $k$ parts
and assign each part to a server node, and we partition the data set $U$ into $k$ parts
and assign them to individual worker nodes. Figure~\ref{fig:ps_partition} illustrates an
example for $k=3$.
More specifically, we want to divide both $U$ and $V$ into $k$ non-overlapping parts
\begin{equation}
  U = \bigcup_{i=1}^k U_i \quad \textrm{and} \quad V = \bigcup_{i=1}^k V_i,
\end{equation}
and assign the part $U_i$ and $V_i$ to the worker node and server node on machine $i$
respectively.

There are three goals when implementing the graph partitioning:

\noindent {\bf Balancing the computational load.} We want to ensure that each machine has
approximately the same computational load. Assume that each example $u_i$ incurs roughly
the same workload, then one of the objective to keep $\max_i |U_i|$ small:
\begin{equation}
  \label{eq:cpu}
  \mini ~ \max_i |U_i|
\end{equation}
{\bf Satisfying the memory constraint.} Inference algorithms
frequently access the parameters (at random). Workers keep these
parameters in memory to improve performance, yet RAM is limited.
Denote by $\Ncal(u_i)$ the neighbor set of $u_i$
\begin{equation}
  \Ncal(u_i) = \cbr{v_j : (u_i, v_j) \in E}.
\end{equation}
Then $\bigcup_{u \in U_i} \Ncal(u)$
is the working set of the parameters worker $i$ needed. For simplicity
we assume that each parameter
$v_j$ has the same storage cost. Our goal to
limit the worker's memory footprint is given by
\begin{equation}
  \label{eq:mem}
 \mini ~ \max_i |\Ncal(U_i)| ~ \textrm{ where } \Ncal(U_i) := \bigcup_{u \in
   U_i} \Ncal(u)
\end{equation}
{\bf Minimizing the communication cost.}
The total communication cost per worker $i$ is $|\Ncal(U_i)|$, which
is already minimized using our previous goal \eq{eq:mem}. To further reduce this
cost, we can assign server $i$ to the same machine with worker $i$, so
that any communication uses memory rather than network. This reduces
the inter-machine communication cost to
$|\Ncal(U_i)| - |\Ncal(U_i) \backslash V_i|$. Figure~\ref{fig:ps_partition} shows an example. Further note
that if $v_j$ is not needed by worker $i$, then server $i$ should
never maintain
$v_j$. In other words, we have $V_i \subseteq \Ncal(U_i)$ and the cost
simplifies to $|\Ncal(U_i)| - |V_i|$.

On the other hand, $\sum_{j \neq i} \abr{V_i \cap \Ncal(U_j)}$ is the
communication cost of server $i$ because other workers must request
parameters from server $i$.
Therefore, the goal to minimize the maximal communication cost of a machine is
\begin{align}
  \label{eq:commcost}
  \mini ~ \max_i {\abr{\Ncal(U_i)} - |V_i| + \sum_{j \neq i} \abr{V_i \cap
      \Ncal(U_j)}}.
\end{align}

\subsection{Related Work}

Graph partitioning has attracted much interest in scientific
 computing \cite{KarKum98, CatAyk99, Devineetal06}, scaling out
large-scale computations \cite{GonLowGuBicetal12, ZhoBruLin12, YanYanZonKha12,
  BouLelVoj14, GonXinDavCraFraetal14, UgaBac13}, graph databases
\cite{ShaWanLi13, VenAmsBroCabChaetal12, CurBecBosDorGrietal13}, search and
social network analysis \cite{PujErrSigYanLaoChhetal11,
  UgaBac13,AhmSheNarJosSmo13}, and streaming processing \cite{StaKli12,
  Stanton12, NisUga13,TsoGkaRadVoj14}.

Most previous work, such as METIS \cite{KarKum98}, is
concerned with edge cuts. Only a few of them solve the vertex cut problem,
which is closely related to this paper, to directly minimize the network
traffic. PaToH \cite{CatAyk99} and Zoltan \cite{Devineetal06} used multilevel
partitioning algorithms related to METIS, while PowerGraph
\cite{GonLowGuBicetal12} adopted a greedy algorithm. Very recently
\cite{BouLelVoj14} studied the relation between edge cut and vertex cut.

Different to these works, we propose a new algorithm based on submodular
approximation to solve the vertex-cut partitioning problem. We give theoretical
analysis of the partition quality, and describe an efficient distributed
implementation. We show that the proposed algorithm outperforms the
state of the art on several large scale real datasets in both in terms
of quality and speed.

\section{Algorithm}
\label{sec:parallel}

In this section, we present our algorithm Parsa for solving the partitioning problem with
multiple objectives  in \eq{eq:cpu}, \eq{eq:mem} and \eq{eq:commcost}.

Note that \eq{eq:mem} is equal to a $k$-way graph partition problem on vertex set $U$ with
vertex-cut as the merit. This problem is NP-Complete \cite{CatAyk99}. Furthermore,
\eq{eq:commcost} is more complex because of the involvement of $V$.
Rather than solving all these objectives together, Parsa decomposes this problem into two
tasks: partition the data $U$ by solving \eq{eq:cpu} and \eq{eq:mem}, and given the
partition of $U$ partition the parameters $V$ by solving \eq{eq:commcost}.
Intuitively, we first assigns data workers to balance the CPU load and minimize the memory
footprint, and then distribute the parameters over servers to minimize inter-machine
communication.

\subsection{Partitioning $U$ over Worker Nodes}
\label{sec:partition-U}

\begin{algorithm}[tb]
  \caption{Partition $U$ via submodular approximation}
  \label{algo:prototype}
  \begin{algorithmic}[1]
    \REQUIRE Graph $G$, \#partitions $k$, maximal \#iterations
    $n$, residue $\theta$, and improvement $\alpha$
    \ENSURE Partitions of $U=\bigcup_{i=1}^k U_i$
    \FOR {$i = 1, \ldots, k$}
    \STATE $U_i\gets \emptyset$
    \STATE define $g_i(T) := f(T \cup U_i) - \alpha|T \cup U_i|$
    \ENDFOR
    \FOR {$t = 1,...,n$}
    \STATE {\bfseries if} $|U|\le k \theta$ {\bfseries then break}
    \STATE \label{algo:select_R} find $i \gets \argmin_j |U_j|$
    \STATE draw $R\subseteq U$  by choosing $u \in U$ with
    probability $\frac{n}{|U| k}$ \!\!\!
    \STATE {\bfseries if} $|R| > 2n/k$ {\bfseries then next}
    \STATE \label{algo:solv_T} solve
    $ T^* = \argmin_{T \subseteq R} g_i(T)$
    \STATE \label{algo:updt_si} {\bfseries if} $g_i(T^*)\le 0$ {\bfseries then} $U_i \gets U_i \cup T^*$ and
    $U \gets U \setminus T^*$
  \ENDFOR
  \STATE {\bfseries if} {$|U| > k \theta$} {\bfseries then return} fail
  \STATE evenly assign the remainder $U$ to $U_i$
\end{algorithmic}
\end{algorithm}

Note that $f(U) := \abr{\Ncal(U)}$ is a set function in the variable $U$.  It is a {\em
  submodular} function similar to convex and concave functions in real variables. Although
the problem in \eqref{eq:cpu} is NP-Complete, there exist several algorithms to solve it
approximately by exploiting the submodularity \cite{SviFle11}.
In our algorithm, we modified \cite{SviFle11} to solve \eq{eq:cpu} and \eq{eq:mem}.  The key difference is that we build up the sets
$U_i$ incrementally, which is important for both partition quality and computational
efficiency at a later stage.

As shown in Algorithm~\ref{algo:prototype}, the algorithm proceeds as follows: in each
round we pick the smallest partition $U_i$ and find the best set of elements to add to
it. To do so, we first draw a small subset of candidates $R$ and select the best subset
using a minimum-increment weight via $\min_{T\subseteq R}f(U_i \cup T) - \alpha |U_i\cup
T|$. If the optimal solution $T^*$ satisfies $f(U_i \cup T^*) < \alpha |U_i\cup T^*|$,
i.e., the cost for increasing $U_i$ is not too large, we assign $T^*$ to partition $U_i$.

Before showing the implementation details in
Section~\ref{sec:impl}, we first analyze the partitioning quality of Algorithm~\ref{algo:prototype}.

\begin{proposition}
  \label{prop:bound}
  Assume that there exists some partitioning $U_i^*$ that satisfies
  $\max_i f(U_i^*) \leq B$. Let $k > \theta = \sqrt{n/\log n}$,
  $c = \rbr{32 \pi}^{-\frac{1}{2}}$,
  $\alpha = BK/\sqrt{n \log n}$ and
  $\tau =  \frac{n^3}{c}\log \frac{1}{1-p}$.
  Then Algorithm~\ref{algo:prototype} will succeed with probability at least $p$
  and it will generate a feasible solution with partitioning cost at most
  $\max_i f(U_i) \leq 4B \sqrt{n/\log n}$.
\end{proposition}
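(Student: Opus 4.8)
The plan is to split the analysis into a deterministic \emph{quality} part, valid whenever the algorithm does not hit the failure branch, and a probabilistic \emph{success} part.

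\textbf{Quality.} The key invariant is that throughout the run, and for every $i$, $f(U_i)\le\alpha|U_i|$. It holds at initialization since $f(\emptyset)=0$, and it is preserved because line~\ref{algo:updt_si} enlarges $U_i$ by $T^*$ only when $g_i(T^*)=f(U_i\cup T^*)-\alpha|U_i\cup T^*|\le 0$. Granting this, I would bound the part sizes: each accepted step grows the currently smallest part (line~\ref{algo:select_R}), whose size just before the update is at most the running average $\le n/k$, while the rejection test forces $|T^*|\le|R|\le 2n/k$; hence $|U_i|\le 3n/k$ throughout the main loop, so $f(U_i)\le 3\alpha n/k=3B\sqrt{n/\log n}$ by the invariant and $\alpha=Bk/\sqrt{n\log n}$. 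On a successful exit $|U|\le k\theta$, so the final even assignment adds at most $\theta$ vertices to any $U_i$; each such vertex $u$ lies in some optimal block $U_j^*$, so $f(\cbr{u})\le f(U_j^*)\le B$ by monotonicity, and subadditivity bounds the added cost by $\theta B=B\sqrt{n/\log n}$. Adding the two contributions gives $\max_i f(U_i)\le 4B\sqrt{n/\log n}$. (The approximation factor is precisely the factor $\theta$ by which $\alpha$ was chosen above the density $Bk/n$ of the best balanced optimal block; this inflation of $\alpha$ is exactly what the probabilistic part needs.)

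\textbf{Per-iteration progress.} The crux is to show that while $|U|=m>k\theta$, one iteration performs an accepted update with probability at least $c/n^2$. Two events are needed. First, $|R|\le 2n/k$, which fails with probability at most $1/2$ by Markov since $\mathbf{E}|R|=m\cdot\frac{n}{mk}=n/k$. Second, $\min_{T\subseteq R}g_i(T)\le 0$. For this I would restrict the optimal partition to the current $U$: the sets $U_j^*\cap U$ partition $U$, each of $f$-value at most $B$, and the largest of them, $S$, has $|S|\ge m/k\ge\theta$. Since $U$ is disjoint from $U_i$, subadditivity and the invariant give, for every $T\subseteq S$, $g_i(T)\le(f(T)-\alpha|T|)+(f(U_i)-\alpha|U_i|)\le f(T)-\alpha|T|\le B-\alpha|T|$, which is $\le 0$ once $|T|\ge B/\alpha=\sqrt{n\log n}/k$. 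Because $k>\log n$ here, $\sqrt{n\log n}/k<\theta\le|S|$, so it suffices that $R$ retain at least $\ell:=\lceil\sqrt{n\log n}/k\rceil$ vertices of $S$; any $\ell$ of them are a set $T\subseteq R$ with $g_i(T)\le 0$. Now $|R\cap S|$ is binomial with mean $|S|q\ge n/k^2$, and the point of $\theta=\sqrt{n/\log n}$ is that, writing $t=k/\theta\ge 1$, both the target ($\ell\approx(\log n)/t$) and the mean ($\approx(\log n)/t^2$) stay $O(\log n)$ uniformly in $k$; a Stirling-type lower bound $\Pr[\mathrm{Bin}(N,q)\ge\ell]\ge(Nq/\ell)^{\ell}(1-q)^{N}$ then yields probability at least $2c/n^2$, the constant $(32\pi)^{-1/2}$ being what this crude binomial estimate produces once $n$ exceeds an absolute constant. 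Multiplying the two events gives the per-iteration bound.

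\textbf{Amplification and the main obstacle.} Each accepted update removes at least one vertex of $U$, so at most $n$ of them are needed before $|U|\le k\theta$; bounding the waiting times between accepted updates by independent geometric random variables with success probability at least $c/n^2$ and applying a Chernoff bound to their sum shows that $\tau=\frac{n^3}{c}\log\frac{1}{1-p}$ iterations suffice to avoid the failure branch with probability at least $p$, which together with the quality bound proves the claim. I expect the main obstacle to be the per-iteration estimate: both (i) noticing that the invariant plus submodularity reduce ``$R$ contains a set with $g_i\le 0$'' to the purely combinatorial event ``$R$ retains $\sqrt{n\log n}/k$ vertices of one optimal block'', and (ii) verifying that with $\theta=\sqrt{n/\log n}$ the resulting binomial tail is inverse polynomial with the stated constant \emph{uniformly} over $\theta<k$ (the ``deep in the tail'' regime near $k=\sqrt{n\log n}$ and the ``at the mean'' regime near $k=\theta$ must both be covered). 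The remaining ingredients — the invariant, the size balance, the remainder split, and the amplification — are routine.
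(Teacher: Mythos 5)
Your proposal is correct in substance and follows the same skeleton as the paper's proof, which itself is an adaptation of Svitkina--Fleischer: the acceptance-condition invariant $f(U_i)\le\alpha|U_i|$, the balance bound $|U_i|\le n/k+2n/k=3n/k$ giving $3\alpha n/k=3B\sqrt{n/\log n}$, the residual contribution $\theta B$, a per-iteration success probability of order $c/n^2$, and Chernoff amplification. The one genuine difference is that the paper outsources the per-iteration probability to the proof of Theorem 5.4 in Svitkina--Fleischer (it only derives $g_i(T^*)\le B-\alpha\upsilon$ and then cites that result), whereas you reconstruct it from scratch: restrict the optimal partition to the remaining $U$, use submodularity plus the invariant to reduce ``some $T\subseteq R$ has $g_i(T)\le 0$'' to ``$R$ retains $\lceil\sqrt{n\log n}/k\rceil$ vertices of the largest restricted optimal block,'' and estimate a binomial tail. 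That is the right argument (and is in fact more careful than the paper's compressed write-up, which compares $T^*$ against $U_1^*\cap U$ even though that set need not lie in $R$), but two points in your sketch need tightening before it is a proof: (i) the events $\{|R|\le 2n/k\}$ and $\{|R\cap S|\ge\ell\}$ are not independent, and conditioning on the first only hurts the second, so you cannot simply multiply their probabilities; instead split $R$ into its independent restrictions to $S$ and to $U\setminus S$ and bound the two pieces separately; (ii) the single bound $\Pr[\mathrm{Bin}(N,q)\ge\ell]\ge(Nq/\ell)^{\ell}(1-q)^{N}$ collapses when the mean $Nq$ is large (the factor $(1-q)^{N}$ can be as small as $e^{-\Theta(\sqrt{n\log n})}$), so the large-mean regime must be handled by a constant-probability concentration argument, as you yourself anticipate, rather than by that formula. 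Neither issue affects the conclusion, and your accounting $\tau=\frac{n^3}{c}\log\frac{1}{1-p}$ (at most $n$ removals, each costing an expected $n^2/c$ iterations) actually matches the proposition as stated, whereas the paper's written proof uses $n^2/c$ --- a discrepancy on the paper's side, not yours.
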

\begin{proof}
  The proof is near-identical as \cite{SviFle11}. Note that we overload the
  meaning of $U$ as it refers to the remaining variables in the algorithm.

  For a given iteration, without loss of generality we assume that $U_1^*$ maximizes
  $|U_j^*\cap U|$ for all $j$. Denote this by $\upsilon =|U_1^* \cap U|$. Since $T^*$ is
  the optimal solution at the current iteration we have $g_i(T^*) \leq g_i(U_1^* \cap U) =
  f((U_1^* \cap U) \cup U_i) - \alpha \abr{(U_1^* \cap U) \cup U_i}$. Further note that by
  monotonicity and submodularity $f((U_1^* \cap U) \cup U_i) \leq f(U_1^* \cap U) +
  f(U_i)$. Moreover, $U \cap U_i = \emptyset$ holds since $U$ contains only the
  leftovers. Consequently $\abr{(U_1^* \cap U) \cup U_i} = \abr{U_1^* \cap U} +
  \abr{U_i}$.  Finally, the algorithm only increases the size of $U_i$ whenever the cost
  is balanced. Hence $f(U_i) - \alpha |U_i| < 0$. Combining this yields
  \begin{align*}
    g_i(T^*) \leq g_i(U_1^* \cap U) \leq
    f(U_1^* \cap U) - \alpha |U_1^* \cap U| \leq B - \alpha \upsilon
  \end{align*}
  Using the results from the proof of \cite[Theorem 5.4]{SviFle11} we
  know that $\upsilon \ge \frac{B}{\alpha}$ and therefore $g_i(T^*)$
  happens with probability at least $\frac{c}{n^2}$.  Hence the
  probability of removing at least one vertex from $U$ within an
  iteration is greater than $\frac{c}{n^2}$.

  Chernoff bounds show that after
  $\tau = -n^2/c \log (1-\delta)$ iterations the algorithm will
  terminate with probability at least $p$ since the residual $U$ is
  small, i.e.,\ $|U| \leq k\theta $.

  The algorithm will never select a $U_i$
  for augmentation unless $|U_i| \leq n/k$ (there would always be a
  smaller set). Moreover, the maximum increment at any given time is
  $2n/k$. Hence $|U_i| \leq 3n/k$ and therefore $f(U_i) \leq 3 n
  \alpha/k$.

  Finally, the contribution of the unassigned residual $U$ is at most
  $\theta B$ since each $U_i$ is incremented by at most $\theta$
  elements and since $f(u) \leq B$ for all $u \in U$. In summary, this
  yields
  $f(U_i) \leq 3 n \alpha/k + B \theta = 4B \sqrt{n/\log n}$.
\end{proof}

\subsection{Partitioning $V$ over Server Nodes}

\begin{algorithm}[t!]
  \caption{Partition $V$ for given $\cbr{U_i}_{i=1}^k$}
  \label{algo:part_V}
  \begin{algorithmic}[1]
    \REQUIRE The neighbor sets $\{\Ncal(U_i)\}_{i=1}^k$
    \ENSURE Partitions $V=\bigcup_{i=1}^k V_i$
    \FOR {$i=1,\ldots k$}
    \STATE $V_i\gets \emptyset$
    \STATE $\textrm{cost}_i\gets |\Ncal(U_i)|$
    \ENDFOR
    \FOR {{\bfseries all } $j \in V$}
    \STATE $ \xi \gets \argmin_{i: u_{ij} \neq 0} \textrm{cost}_i$
    \STATE $V_{\xi} \gets V_{\xi} \cup \{j\}$
    \STATE $\textrm{cost}_{\xi} \gets \textrm{cost}_{\xi} - 1 + \sum_{i \neq \xi} u_{ij}$
    \ENDFOR
  \end{algorithmic}
\end{algorithm}


Next, given the partition of $U$, we find an assignment of parameters in $V$ to servers.
We reformulate \eq{eq:commcost} as a convex integer programming problem with totally
unimodular constraints \cite{HelTom56}, which is then solved using a sequential
optimization algorithm performing a sweep through the variables.

We define index variables $v_{ij} \in \cbr{0, 1}$, $j=1,\ldots,k$ to indicate
which server node maintains a particular parameter $v_i$. They need to
satisfy $\sum_{j=1}^k v_{ij} = 1$. Moreover, denote by $u_{ij} \in
\cbr{0, 1}$ variables that record whether $j \in \Ncal(U_i)$. Then we
can rewrite \eq{eq:commcost} as a convex integer program:
\begin{subequations}
  \label{eq:intprog}
\begin{align}
  \mini_{v} ~ & \max_i |\Ncal(U_i)| + \sum_j v_{ij} \sbr{-1 + \sum_{l
      \neq i} u_{lj}} \\
  \text{subject to} ~ & \sum_{j} v_{ij} = 1
  \text{ and } v_{ij} \in \cbr{0, 1}
  \text{ and } v_{ij} \leq u_{ij}
\end{align}
\end{subequations}
Here we exploited the fact that
$\sum_j v_{ij} u_{lj} = |V_i \cap \Ncal(U_l)|$ and that
$\sum_{j=1}^k v_{ij} = |V_i|$. These constraints are totally
unimodular, since they satisfy the conditions of \cite{HelTom56}. As a
consequence every vertex solution is integral and we may relax the
condition $v_{ij} \in \cbr{0, 1}$ to $v_{ij} \in [0, 1]$ to obtain a
convex optimization problem.


Algorithm~\ref{algo:part_V} performs a single sweep over
\eq{eq:intprog} to find a locally optimal assignment of one variable
at a time. We found that it is sufficient for a near-optimal
solution. Repeated sweeps over the assignment space are
straightforward and will improve the objective until convergence to
optimality in a \emph{finite} number of steps: due to convexity all
local optima are global. Further note that we need not store the full
neighbor sets in memory. Instead, we can perform the assignment in a
streaming fashion.

\section{Efficient Implementation}
\label{sec:impl}

The time complexity of Algorithm~\ref{algo:part_V} is $\Ocal(k(|U|+|V|))$,
however, it could be $\Ocal(k|U|^6)$ for
Algorithm~\ref{algo:prototype}, which is infeasible in practice.
We now discuss how to implement Algorithm~\ref{algo:prototype}
efficiently. We first present how to find the optimal $T^*$ and
sample $R$. Then we address the parallel implementation with the \ps,
and finally describe neighbor set initialization to improve the
partition quality.

\subsection{Finding ${T}^*$  efficiently}
\def\top{\textrm{min}}

\begin{algorithm}[tb]
  \caption{Partitioning $U$ efficiently}
  \label{algo:assign_a_blk}
  \begin{algorithmic}[1]
    \REQUIRE Graph $G(U,V,E)$, \#partitions $k$, and initial neighbor sets $\{S_i\}_{i=1}^k$
    \ENSURE Partitioned $U=\bigcup_{i=1}^k U_i$ and updated neighbor sets which
    are equal to $\{S_i\cup \Ncal(U_i)\}_{i=1}^k$
     \FOR {$i=1,\ldots,k$}
     \STATE $U_i\gets \emptyset$
     \STATE {\bf for all} $u \in U$ {\bf do} $\Acal_i(u) = |\,\Ncal(u)\setminus S_i\,|$
    \ENDFOR
    \WHILE {$|U|>0$}
    \STATE pick partition $i\gets \arg\min_j |S_j|$
    \STATE \label{step:top} pick the lowest-cost vertex $u^* \gets \Acal_i.\top$
    \STATE assign $u^*$ to partition $i$: $U_i\gets U_i \cup  \{u^*\}$
    \STATE remove $u^*$ from $U$: $U\gets U\setminus \{u^*\}$
    \STATE \label{step:remove} {\bf for} $j = 1, \ldots, k$ {\bf do} remove $u^*$ from $\Acal_j$
    \FOR {$v \in \Ncal(u^*)\setminus  S_i$}
    \STATE $S_i \gets S_i \cup \{v\}$
    \STATE  \label{step:cost} \label{step:update} {\bf for} $u \in \Ncal(v) \cap
    U$ {\bf do} $\Acal_i(u) \gets \Acal_i(u)-1$
    \ENDFOR
    \ENDWHILE
  \end{algorithmic}
\end{algorithm}

\begin{figure}[tb]
  \centering
  \includegraphics[width=\columnwidth]{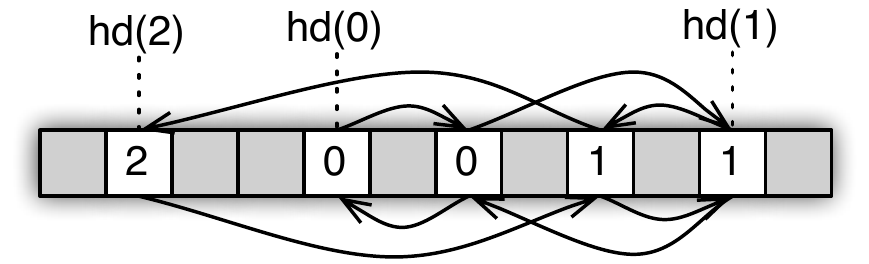}
  \vspace{-2mm}
  \caption{Vertex costs are stored in an array. The $i$-th entry is
    used for vertex $u_i$, where assigned vertices are
    marked with gray color. Header points and
    a doubly-linked list afford faster access.
  \label{fig:array}}
  \vspace{-2mm}
\end{figure}

The most expensive operation in the inner loop of Algorithm~\ref{algo:prototype}
is step~\ref{algo:solv_T}, determining which vertices, $T^*$, to add to a
partition.  Submodular minimization problems incur $\Ocal(n^6)$ time
\cite{Orlin09}. Given the fact that this step is invoked frequently and the
problem is large, this strategy is impractical.
A key approximation Parsa made is to add only a single vertex at a time
instead of a set of vertices:
Given a vertex set $R$ and partition $i$, it finds vertex $u^*$ that minimizes
\begin{equation}
  \label{eq:T*}
  u^*=\argmin_{u\in R} g_i(u) := |\Ncal(\cbr{u}\cup U_i)| - \alpha |\!\cbr{u}\cup U_i|
\end{equation}
An additional advantage of this approximation is that we are now
solving exactly the CPU load balancing problem \eq{eq:cpu}. Since we
only assign one vertex at a time to the smallest partition, we obtain
perfect balancing.

Even though this approximation improves the performance, a naive way to
calculate \eq{eq:T*} is to compute all $g_i(u)$ to find $u^*$ with the
minimal value for each iteration. If the size of $R$ is a constant fraction of
the entire graph, this leads to an undesirable time complexity of
$\Ocal(|U| |E|)$.  This remains impractical for graphs with billions of
vertices and edges.

We accelerate computation as follows: we store all vertex costs
to avoid re-computing them, and we create a data structure
to locate the lowest-cost vertex efficiently.

{\bfseries Storing vertex costs.}
If we subtract the constant $|\Ncal(U_i)| +
  \alpha (|U_i| +1)$ from $g_i(u)$, we obtain the vertex cost
  \begin{equation}
   \textrm{cost}_i(u) := |\Ncal(U_i\cup\{u\})|-|\Ncal(U_i)|.
  \end{equation}
  This is the number of new vertices that would be added to the neighbor
  set of partition $i$ due top adding vertex $u$ to $i$.

  When adding $u$ to a partition $i$, only the costs of a few vertices
  will be changed. Denote by
  $$\Delta_i := \cbr{v \in \Ncal(u) : v \notin \Ncal(U_i)}$$
  the set of new vertices will be added into the
  neighbor set of $U_i$ when assigning $u$ to partition $i$. Only vertices
  in $U$ connected to vertices in $\Delta_i$ will have their costs
  affected, and these costs will only be reduced and never increased. Due to the
  sparsity of the graph, this is often a small subset of the total vertices.
  Hence the overhead of updating the vertex costs is much smaller
  than re-computing them repeatedly.

{\bfseries Fast vertex cost lookup.}
We build an efficient data structure to store the vertex costs, which is
illustrated at Figure~\ref{fig:array}.  For each partition $i$, we use an array
$\Acal_i$ to store the $j$-th vertex cost, $\textrm{cost}_i(u_j)$, in the $j$-th
entry, denoted by $\Acal_i(u_j)$.  We then impose a doubly-linked list on top of
this array in an increasing  order to rapidly locate the lowest-cost vertex.
When a vertex cost is modified (always reduced), we update the doubly-linked
list to preserve the order.

Note that most large-scale graphs have a power-law degree distribution.
Therefore a large portion of vertex costs will be small integers, which are
always less equal that their degrees. We store a small array of ``head''
pointers to the locations in the list where the cost jumps to $0,1,2, \ldots
\theta$. The pointers accelerate locating elements in the list when updating. In
practice, we found $\theta = 1000$ covers over 99\% of vertex costs.

The algorithm is illustrated in Algorithm~\ref{algo:assign_a_blk}. The inputs
are a bipartite graph $G=(U,V,E)$, the number of partitions $k$, together with
$k$ sets $S_i\subseteq V$, which is the union neighbor set of vertices have been
assigned to partition $i$ before. The outputs are the $k$ partitions
$U=\bigcup_i U_i$ and updated $S_i$ with the neighbor set of $U_i$
included. Here we assume $R=U$, the sampling strategy of $R$ will be addressed
in next section.

\textbf{Runtime.} The initial $\Acal_i(u)$ can be computed in $\Ocal(|E|)$
time and then be ordered in $\Ocal(|U|)$ by counting sort, as they are integers,
upper bounded by the maximal vertex degree. The most expensive part of
Algorithm~\ref{algo:assign_a_blk} is updating $\Acal_i$ in
step~\ref{step:update}. This is are evaluated at most $k|E|$ times because, for
each partition, a vertex $v\in V$ together with its neighbors is accessed at
most once.

For most cases, the time complexity of updating the doubly-linked
lists is $\Ocal(1)$. The cost to access the $j$-th vertex is
$\Ocal(1)$ due to the sequential storing on an array.  Finding a
vertex with the minimal value or removing a vertex from the list is
also in $\Ocal(1)$ time because of the doubly links.  Keeping the list
ordered after decreasing a vertex cost by $1$ is $\Ocal(1)$ in most
cases ($\Ocal(|U|)$ for the worst case), as discussed above, by using
the cached head pointers.

The average time complexity of Algorithm~\ref{algo:assign_a_blk} is then
$\Ocal(k|E|)$, much faster than the naive implementation and orders of magnitude
better than Algorithm~\ref{algo:prototype}.

\subsection{Division into Subgraphs}
\label{sec:divide-into-subgr}

One goal of the sampling strategy used in Algorithm~\ref{algo:prototype} is to
keep the partitions of $U$ balanced, because the vertices assigned to a
partition at a time is being limited. The additional constraint $|T|=1$
introduced in the previous section ensures that only a single vertex is assigned each
time, which addresses balancing. Consequently we would like
to sample as many vertices as possible to enlarge the search range of the
optimal $u^*$ to partition quality.
Sampling remains appealing since it is a trade-off between computation
efficiency and partition quality.

\algo first randomly divides $U$ into $b$ blocks. It next constructs the
corresponding $b$ subgraphs by adding the neighbor vertices from $V$ and the
corresponding edges, and then partitions these subgraphs sequentially by
Algorithm~\ref{algo:assign_a_blk}. In other words, denote by $\{G_j\}_{j=1}^b$
the $b$ subgraphs, and $\{S_i\}_{i=1}^k$ the initialized neighbor sets, for
instances $S_i = \emptyset$ for all $i$; at iteration $j=1,\ldots,b$, we
sequentially feed $G_j$ and $S_i$'s into Algorithm~\ref{algo:assign_a_blk} to
obtain the partitions $\bigcup_{i=1}^k U_{i,j}$ of $G_j$ and updated $S_i$'s,
which contain the previous partition information.  Then we union the results on
each subgraph to the final partitions of $U$ by $U_i =\bigcup_{j=1}^b U_{i,j}$
for $i=1,\ldots,k$.

Compared to the scheme described in Section~\ref{sec:partition-U} which samples
a new subgraph ($R$) for each single vertex assignment, \algo fixes those
subgraphs at the beginning.  This sampling strategy has several
advantages. First, it partitions a subgraph by directly using
Algorithm~\ref{algo:assign_a_blk}, which takes advantage of the head pointers
and linked list to improve the efficiency. Next, it is convenient to place both
the initialization of neighbor sets and parallelization which will be introduced
soon on subgraph granularity. Finally, this strategy is I/O efficient, because
we must only keep the current subgraph in memory. As a result, it is possible to
partition graphs of sizes much larger than physical memory.

The number of subgraphs $b$ is a trade-off between partition quality and
computational efficiency. In the extreme case of $b=1$, the vertex assigned to
a partition is the optimal one from all unassigned vertices. It is, however, the
most time consuming. In contrast, though the time complexity reduces to
$\Ocal(|E|)$ when letting $b=|U|$, we only get random partition
results. Therefore, a well-chosen size $b$ not only removes the graph size
constraint but also balances time and quality.

\subsection{Parallelize  with the Parameter Server}

Although Parsa can partition very large graphs with a single process by taking
advantage of sampling, parallelization is desirable because of the reduction of
both CPU and I/O times on each machine. Parsa parallelizes the
partitioning by processing different
subgraphs in parallel (on different nodes) by using the shared neighbor sets.

To implement the algorithm using the \ps, we need the following three
groups of nodes:
\begin{description}
\item[The scheduler] issues partitioning tasks to workers and monitors their progress.
\item[Server nodes] maintain the global shared neighbor sets. They process
   push and pull requests by workers.
\item[Worker nodes] partition subgraphs in parallel. Every time a worker
  first reads a subgraph from the (distributed) file system. It then pulls the
  newest neighbor sets associated with this subgraph from the servers. Then, it
  partitions this subgraph using Algorithm~\ref{algo:assign_a_blk} and finally
  pushes the modified neighbor sets to the servers.
\end{description}

\subsection{Initializing the Neighbor Sets}

The neighbor sets play a similar role as cluster centers on clustering methods,
both of which affect the assignment of vertices. Well-initialized neighbor sets
potentially improve the partition results. Initialization
by empty sets, which prefers assigning vertices with small degrees
first, however, often helps little, or even
degrades, the resulting assignment.
Parsa uses several initialization
strategies to improve the results:
%
\begin{description}
\item[Individual initialization.] Given a graph that has been divided into $b$
  subgraphs, we can runs $a+b$ iterations where the results for the first $a$
  iterations are used for initialization. In other words, before processing the
  $(j+1)$-th subgraph, $j\le a+1$, we reset the neighbor set by $S_i =
  \Ncal(U_{i,j})$, where $\{U_{i,j}\}_{i=1}^k$ are the partitions of $j$-th
  subgraph.  The old results are dropped because otherwise a vertex $u$ will be
  assigned to its old partition $i$ again as $S_i$ contains the neighbors of $u$
  and the cost $|\Ncal(u)\setminus S_i|$ will then be 0.
\item [Global initialization.] In parallel partitioning, before
  starting all workers, we first sample a small part from the graph
  and then let one worker partition this small subgraph. Then we can
  use the resulting neighbor sets as an initialization to all workers.
\item [Incremental partitioning.] In this setting, data arrives in an
  incremental way and we want to partition the new data efficiently. Since we
  already have the partitioning results on the old data, we can use
  these results as initialization of the neighbor sets.
\end{description}

\subsection{Puting it all together}

Algorithm~\ref{algo:parallel} shows Parsa, which partitions $U$ into $k$ parts
in parallel. Then we can assign $V$ using Algorithm~\ref{algo:part_V} if
necessary.  The initial neighbor sets can be obtained from global initialization
or incremental partitioning discussed in the previous section.  There
are several details worth noting:
First, while communication in the \ps is asynchronous, Parsa imposes a
maximal allowed delay $\tau$ to control the data consistency.
%
Second, the worker might only push the changes of the neighbor sets to the servers to save the communication traffic.
Finally, a worker may start a separate data pre-fetching thread to run
steps 3, 4 and 5 to improve the efficiency.

\begin{algorithm}[t!]
  \caption{\algo: parallel submodular approximation}
  \label{algo:parallel}
  \begin{algorithmic}[1]
    \REQUIRE Graph $G$, initial neighbor sets $\cbr{S_i}_{i=1}^k$, \#partitions $k$, max delay $\tau$, initialization from
    $a$, \#subgraphs $b$.
    \ENSURE partitions $U=\bigcup_{i=1}^k U_i$
  \end{algorithmic}
  {\bf Scheduler:}
  \begin{algorithmic}[1]
    \STATE divide $G$ into $b$ subgraphs
    \STATE ask all workers to partition with
    $(a, \tau, \textrm{true})$
    \STATE ask all workers to partition with
    $(b, \tau, \textrm{false})$
  \end{algorithmic}
  {\bf Server:}
  \begin{algorithmic}[1]
    \STATE start with a part of $\cbr{S_i}_{i=1}^k$
    \IF {receiving a { pull} request}
    \STATE reply with the requested neighbor set $\cbr{S_i}_{i=1}^k$
    \ENDIF
    \IF {receiving a { push} request containing $\cbr{S_i^{\textrm{new}}}_{i=1}^k$}
    \IF {$\textrm{initializing}$}
    \STATE $S_i \gets S_i^{\textrm{new} }$ for $i = 1, \ldots, k$
    \ELSE
    \STATE $S_i \gets S_i \cup S_i^{\textrm{new}}$ for $i = 1, \ldots, k$
    \ENDIF
    \ENDIF
  \end{algorithmic}
  {\bf Worker:}
  \begin{algorithmic}[1]
    \STATE receive hyper-parameters $(T, \tau, \textrm{initializing})$
    \FOR {$t = 1, \ldots, T$}
    \STATE load a subgraph $G(U,V,E)$
    \STATE wait until all {push}es before time $t-\tau$ finished
    \STATE {pull} the part of neighbor sets, $\cbr{S_i}_{i=1}^k$, that
    contained in $V$ from the servers
    \STATE get partitions $\cbr{U_i^{\textrm{new}}}_{i=1}^k$ and updated neighbor sets
    $\cbr{S^{\textrm{new}}_i}_{i=1}^k$ using Algorithm~\ref{algo:assign_a_blk}
    \IF {$\textrm{initializing} = \textrm{false}$ and $t>1$ }
    \STATE $S^{\textrm{new} }_i \gets S^{\textrm{new} }_i \;\backslash\; S_i$ for all $i=1,\ldots, k$
    \ENDIF
    \STATE {push} $\cbr{S^{\textrm{new} }_i}_{i=1}^k$ to servers
    \STATE {\bf if not} initializing  {\bf then } $U_i \gets U_i \cup U_i^{\textrm{new} }$
    for all $i$ 
    \ENDFOR
  \end{algorithmic}
\end{algorithm}

\section{Experiments}
\label{sec:experiments}

We chose 7 datasets of varying type and scale, as summarized in
Table~\ref{tab:dataset}. The first three are text
datasets\footnote{\small\url{http://www.csie.ntu.edu.tw/~cjlin/libsvmtools/datasets/}};
live-journal and orkut are social
networks\footnote{\small\url{http://snap.stanford.edu/data/}}; and the last two
are click-through rate datasets from a large Internet company. The numbers of
vertices and edges range from $10^4$ to $10^{10}$.

\subsection{Setup}

\def\ctra{\textsf{CTRa}\xspace}

\def\lj{\textsf{live-journal}\xspace}
\def\orkut{\textsf{orkut}\xspace}
\def\news{\textsf{news20}\xspace}
\def\kdda{\textsf{KDDa}\xspace}
\def\rcv{\textsf{rcv1}\xspace}

\begin{table}[t]
  \centering
  \begin{tabular}{|l|rrrr|}
    \hline
    name         & $|U|$ & $|V|$ & $|E|$ & type       \\
    \hline
    \textsf{rcv1} & 20K  & 47K  & 1M   & bipartite  \\ 
    \news         & 20K  & 1M   & 9M   & bipartite  \\ 
    \kdda         & 8M   & 20M  & 305M & bipartite  \\ 
    \lj           & 5M   & 5M   & 69M  & directed   \\
    \orkut        & 3M   & 3M   & 113M & undirected \\
    \ctra         & 1M   & 4M   & 120M & bipartite  \\
    \ctrb         & 100M & 3B   & 10B  & bipartite  \\
    \hline
  \end{tabular}
  \caption{A collection of real datasets. }
  \vspace{-2mm}
  \label{tab:dataset}
\end{table}

{
  \def\mem{${M}_{\textrm{max}}$\xspace}
  \def\tmax{${T}_{\textrm{max}}$\xspace}
  \def\tsum{${T}_{\textrm{sum}}$\xspace}
  \def\headb{\multicolumn{3}{c|}{imprv (\%) } & {\small time} }
  \def\headc{\mem & \tmax & \tsum & {\small (sec)}}
\begin{table*}[t!]
   \setlength{\tabcolsep}{1.5pt}
  \centering
  \begin{tabular}{|l|rrr|r|rrr|r|rrr|r|rrr|r|rrr|r|}
    \hline
    \multirow{3}{*}{datasets} & \multicolumn{4}{c |}{PowerGraph}  & \multicolumn{4}{c |}{METIS}  &\multicolumn{4}{c |}{PaToH} &
    \multicolumn{4}{c |}{Zoltan} & \multicolumn{4}{c |}{\algo }    \\
    \cline{2-21}
    & \headb & \headb & \headb & \headb & \headb \\
    & \headc & \headc  & \headc  & \headc & \headc \\
    \hline
    \rcv   & -  & -  & -  & -       & -        & -        & -        & -   & \sd{19} & 26       & \ft{279}  & 7         & 17      & \sd{105}  & \sd{154}  & \sd{6}   & \ft{33}  & \ft{112}  & 108       & \ft{0.2} \\
    \news  & -  & -  & -  & -       & -        & -        & -        & -   & \sd{2}  & 123      & \ft{389}  & 23        & 0       & \ft{214}  & \sd{267}  & \sd{21}  & \ft{23}  & \sd{187}  & 155       & \ft{1}   \\
    \ctra  & -  & -  & -  & -       & -        & -        & -        & -   & 8       & 446      & \sd{970}  & 571       & \sd{70} & \ft{1052} & \ft{1211} & \sd{551} & \ft{91}  & \sd{922}  & 913       & \ft{18}  \\
    \kdda  & -  & -  & -  & -       & -        & -        & -        & -   & \sd{54} & \sd{905} & \sd{1102} & \sd{1401} & 4       & 238       & 313       & 2409     & \ft{120} & \ft{1973} & \ft{1978} & \ft{89}  \\
    \lj    & 61 & 84 & 89 & \ft{9}  & \ft{185} & \ft{231} & \sd{279} & 65  & 103     & 152      & 160       & 3.5h      & 50      & 84        & \ft{386}  & 1072     & \sd{142} & \sd{216}  & 214       & \sd{37}  \\
    \orkut & 55 & 74 & 78 & \ft{12} & 56       & 74       & 103      & 104 & \sd{87} & 145      & 150       & 5.5h      & 49      & \sd{170}  & \ft{180}  & 1413     & \ft{105} & \ft{177}  & \sd{121}  & \sd{39}  \\
    \hline
  \end{tabular}
  \caption{Improvements (\%) comparing to random partition on the maximal individual
    memory footprint \mem, maximal individual
    traffic volumes \tmax, and total traffic volumes \tsum together with running times (in sec) on
    16-partition. The best results are colored by \ft{Red} and the second best by  \sd{Green}.}
  \label{tab:st}
\end{table*}
}
\begin{figure}[tb]
  \noindent%
  \includegraphics[width=\columnwidth]{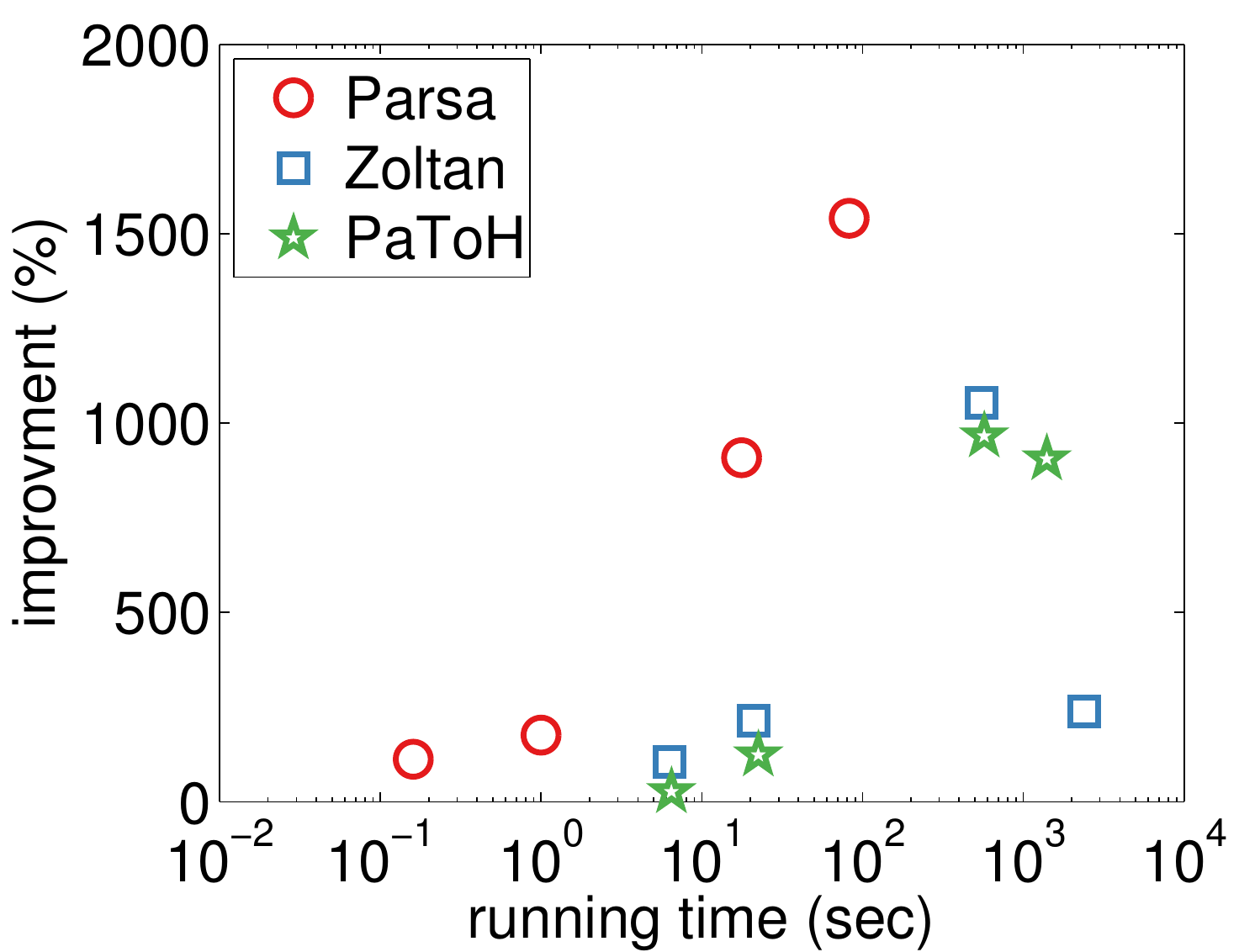}
  \includegraphics[width=\columnwidth]{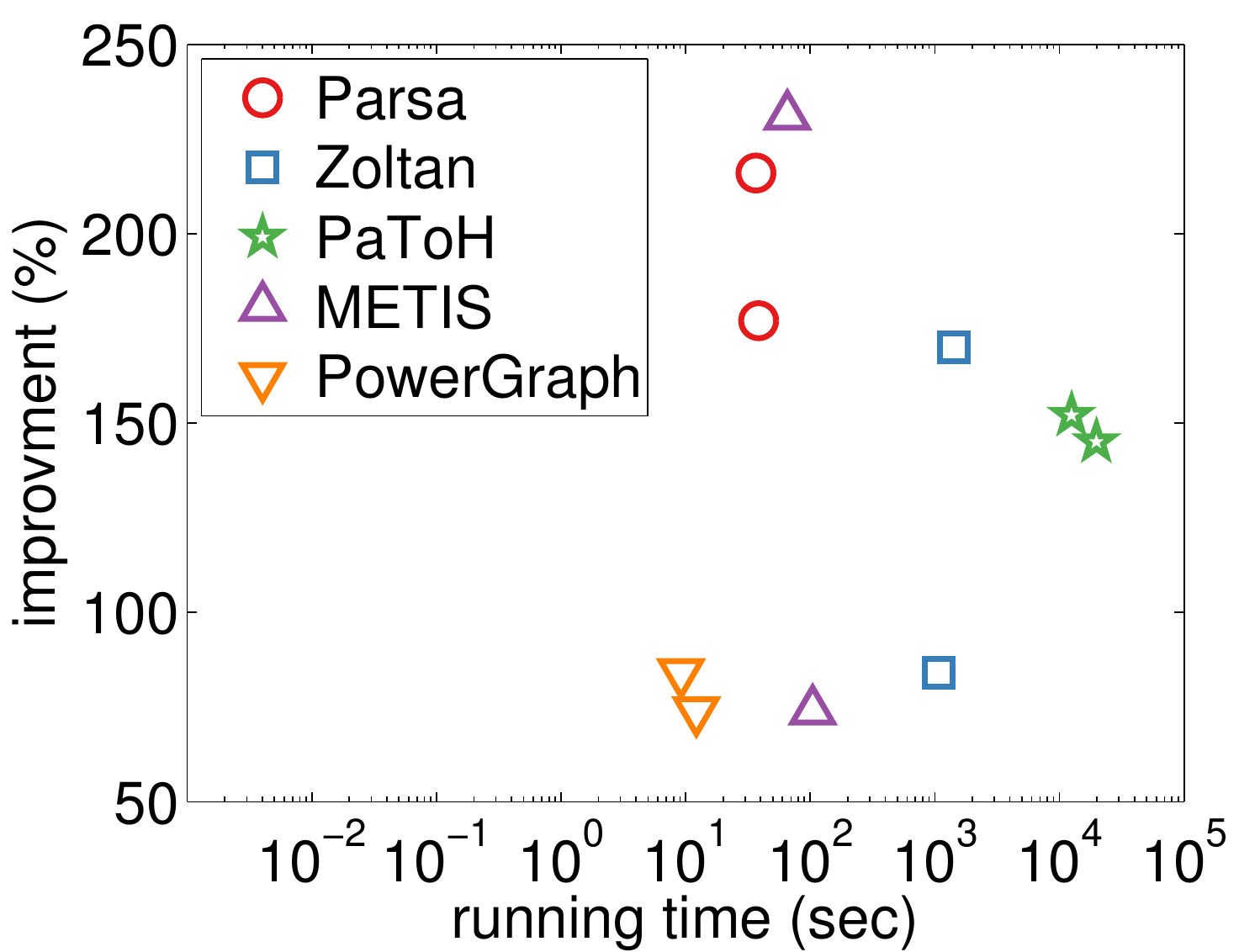}
  \caption{Visualization of comparisons from Table~\ref{tab:st}. Top:
    text datasets; Bottom: Social Networks; Note that on text datasets
    (naturally bipartite graphs) Parsa is both orders of magnitude
    faster and yields better results. On social networks PowerGraph is
  faster, which is to be expected since it uses very simple and fast
  partitioning.}
  \label{fig:cmp_res}
\end{figure}

We implemented \algo in the \ps~\cite{LiAndParSmoetal14}; the source
is available at \url{https://github.com/mli/parsa}. We compared \algo with
popular vertex-cut graph partition toolboxes
Zoltan\footnote{\small\url{http://www.cs.sandia.gov/Zoltan/}} and
PaToH\footnote{\small\url{http://bmi.osu.edu/~umit/software.html}}, which can
also take bipartite graphs as inputs. We also used the well-known graph
partition package
METIS\footnote{\small\url{http://glaros.dtc.umn.edu/gkhome/metis/metis/overview}}
and the greedy algorithm adopted by
Powergraph\footnote{\small\url{http://graphlab.org/downloads/}}, though these
handle only normal graphs. All algorithms are implemented in C/C++.

We report both runtime and partition results.  The default
measurement of the latter is the maximal individual traffic volume. We counted
the improvement against random partition by
${(\text{random}-\text{proposed})}/{\text{proposed}} \times 100\%,$
where 100\% improvement means that traffic or memory footprint are
50\% of that achieved by random partitioning.

The default number of partitions was set to $16$. As \algo is a randomized
algorithm, we recorded the average results over 10 trials. Single thread
experiments used a desktop with an Intel i7 3.4GHz CPU, while the parallel
experiments used a university cluster with 16 machines, each with an Intel Xeon
2.4GHz CPU and 1 Gigabit Ethernet.

\subsection{Comparison to other Methods}

Table \ref{tab:st} shows the comparison results on different datasets.
We recorded the CPU time on running each algorithm except for loading the data,
because its performance varies for different data formats each algorithm used.
The improvements are measured on maximal individual memory footprint and traffic
volume, together with total traffic volume, which is the objective for both
PaToH and Zoltan.  Since neither METIS nor PowerGraph handle general sparse
matrices, only results on social networks are reported. The number of partitions
is 16, and the parameters of \algo are fixed by $a=b=16$. See
Figure~\ref{fig:cmp_res} for improvements on maximal individual
traffic volume and runtimes.

As can be seen, \algo is not only 20x faster than PaToH and Zoltan,
but also produces more stable partition results, especially on
reducing the memory footprint. METIS outperforms \algo on one of
the two social networks but consumes twice as much CPU
time. PowerGraph is the fastest but suffers the cost of worse
partition quality. Under both measurements
on maximal individual traffic volume and total traffic volume, \algo
produces similar results.

As the number of partitions increases, the recursive-bisection-based
algorithms (METIS, PaToH, and Zoltan) retain their runtimes, but their
partition quality degrades, as shown in Figure~\ref{fig:vary_k}.  In
contrast, \algo and Powergraph compute $k$-partitions directly.  Their
runtimes increase linearly with $k$, but their partition quality
actually improves.

\begin{figure*}[tbh!]
  \includegraphics[height=0.33\textwidth]{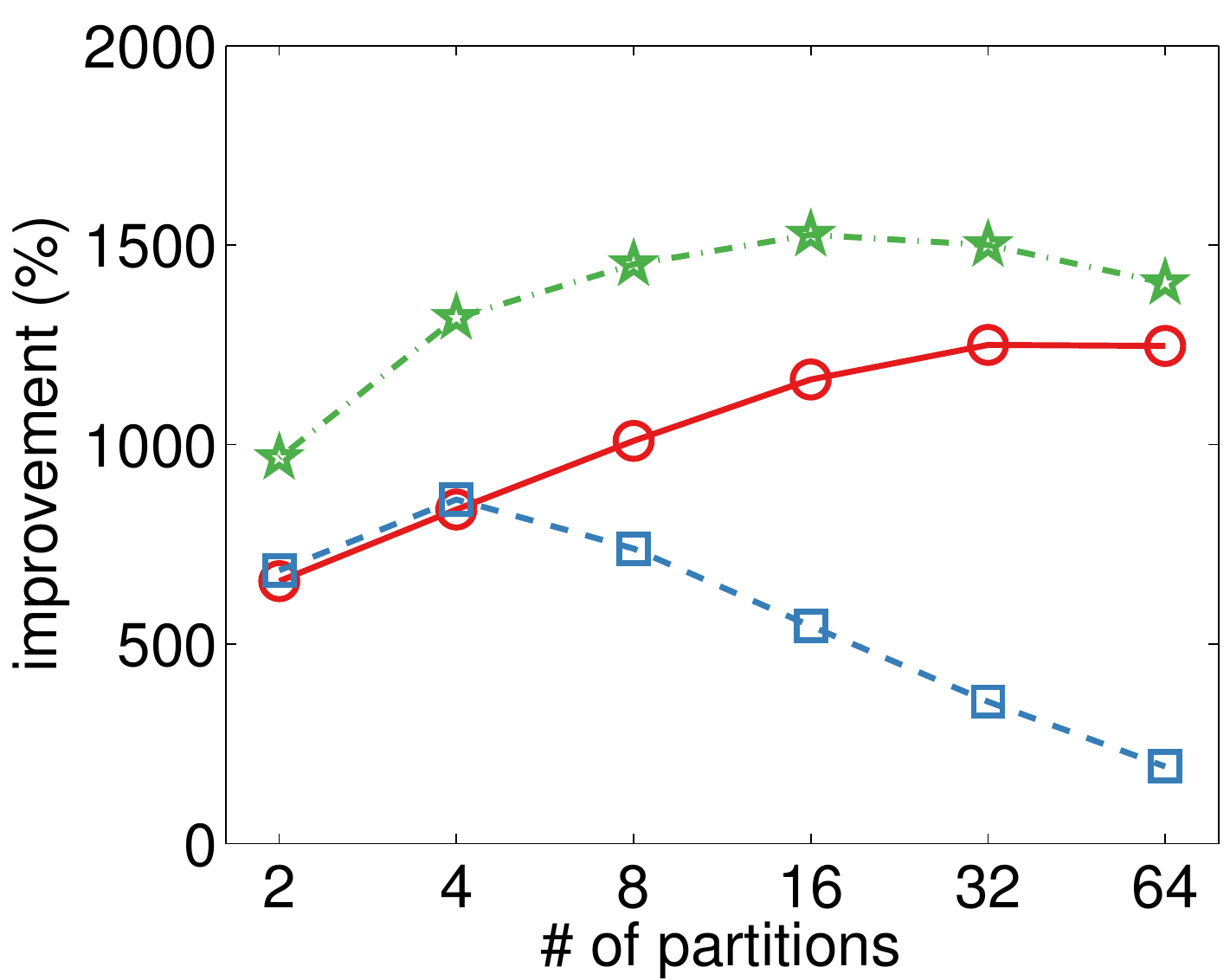} 
  \includegraphics[height=0.33\textwidth]{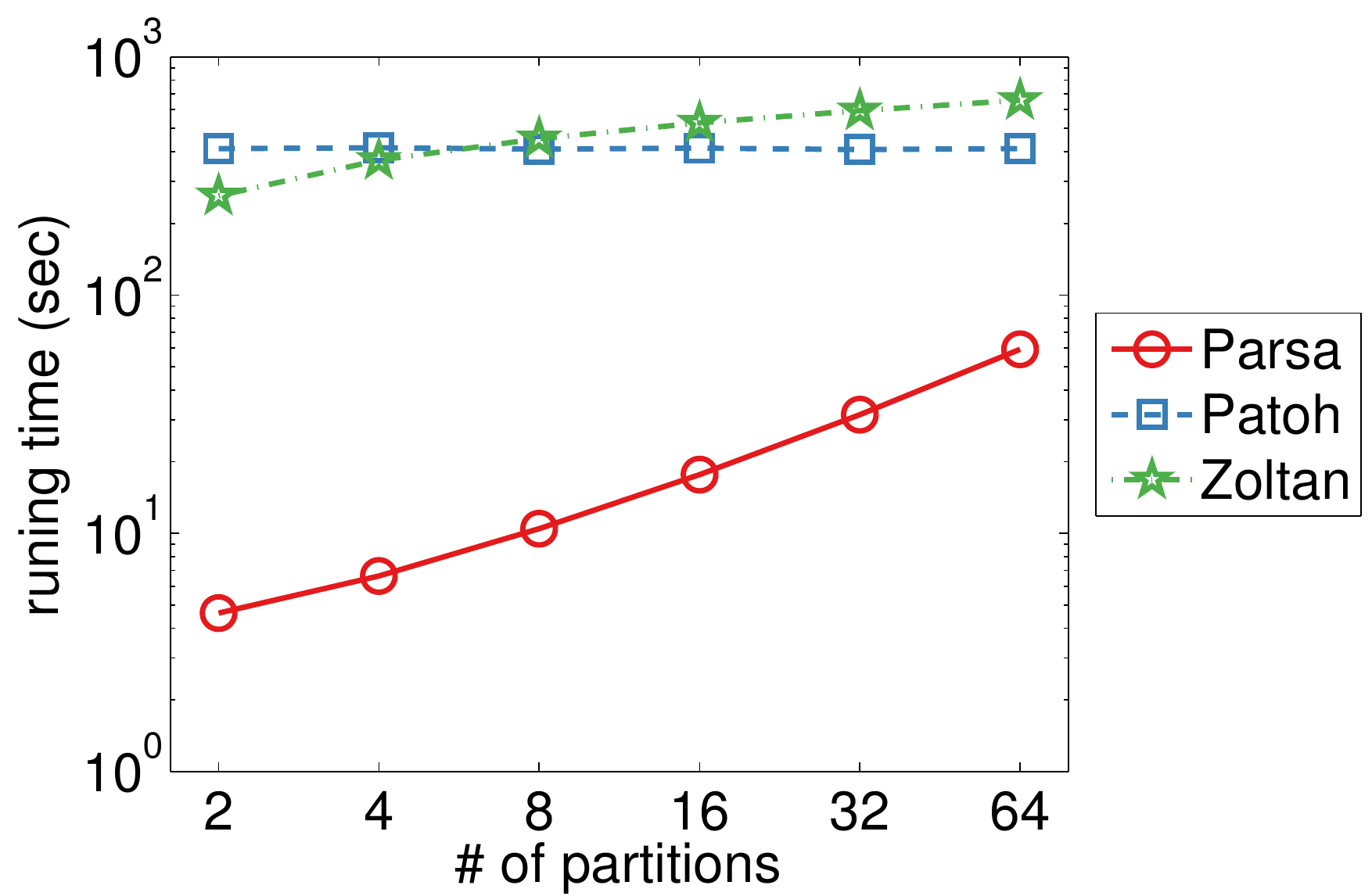} \\
  \includegraphics[height=0.33\textwidth]{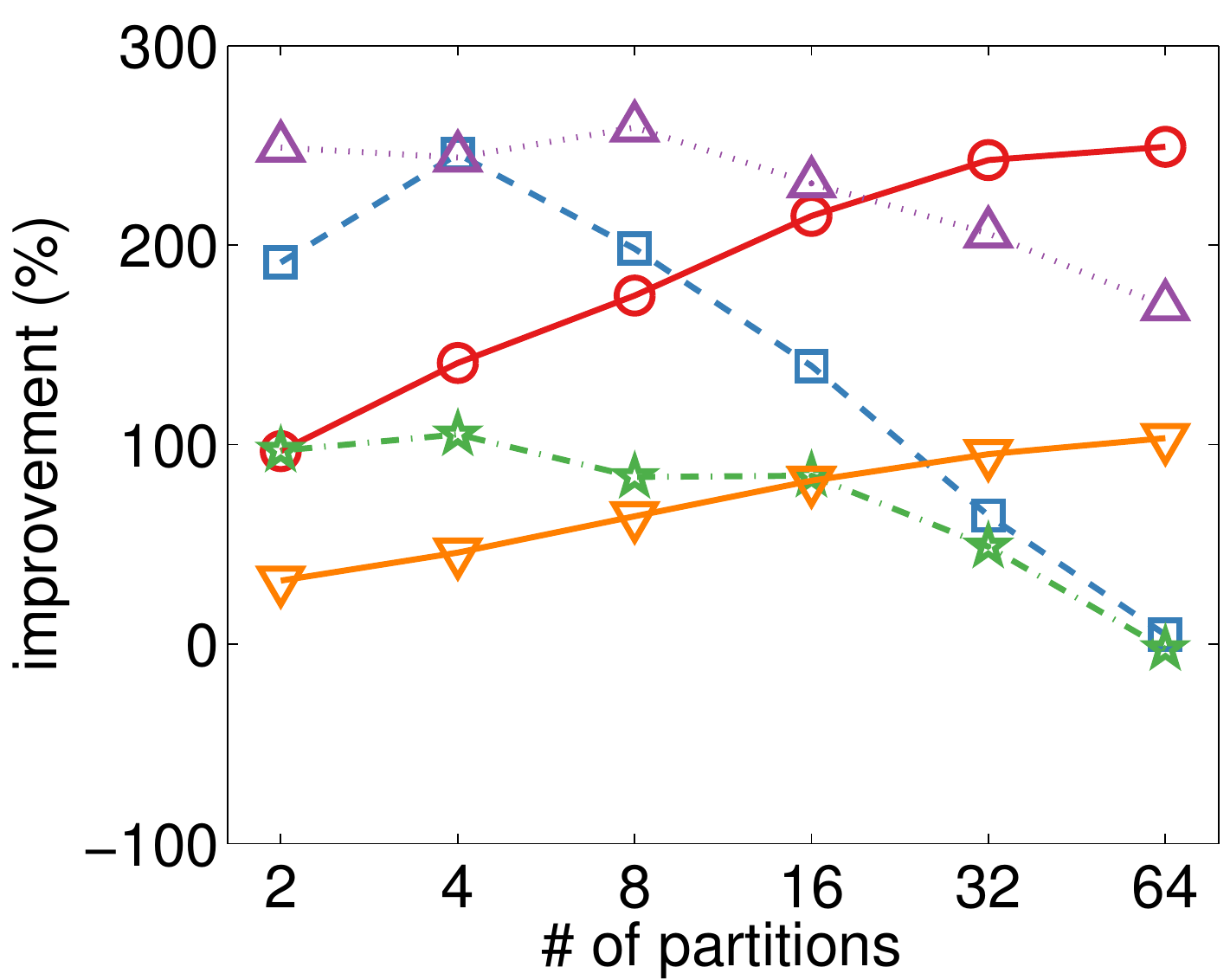}
  \includegraphics[height=0.33\textwidth]{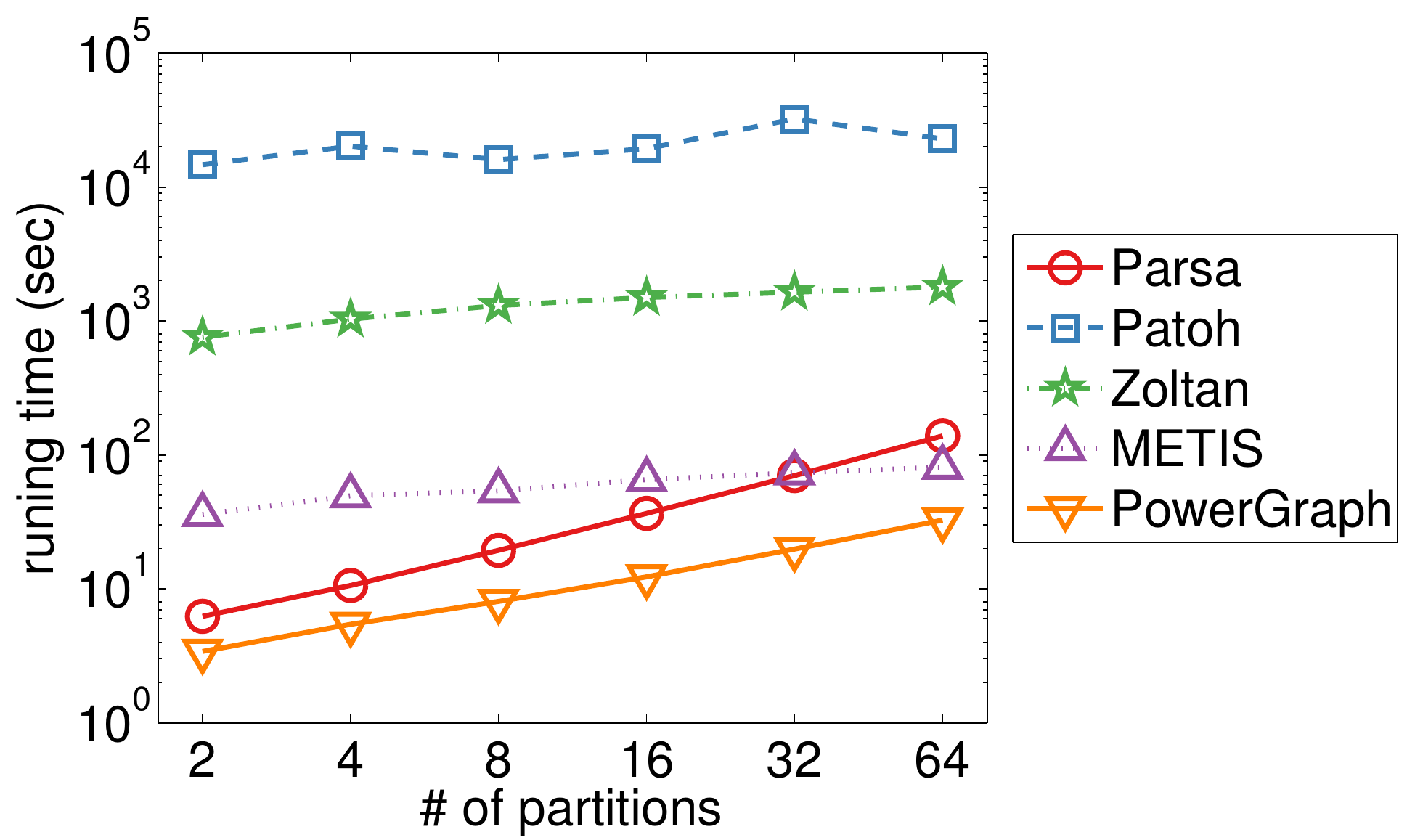}
  \caption{Improvement over random partitioning when changing the
    number of partitions $k$. Top: \ctra; Bottom: \lj. Note that even both Parsa
    and PowerGraph use more time when $k$ increases, they
    improve the partition quality.}
  \label{fig:vary_k}
\end{figure*}

\subsection{Number of subgraphs and initialization}

\begin{figure*}[tb]
  \noindent%
    \includegraphics[width=0.49\textwidth]{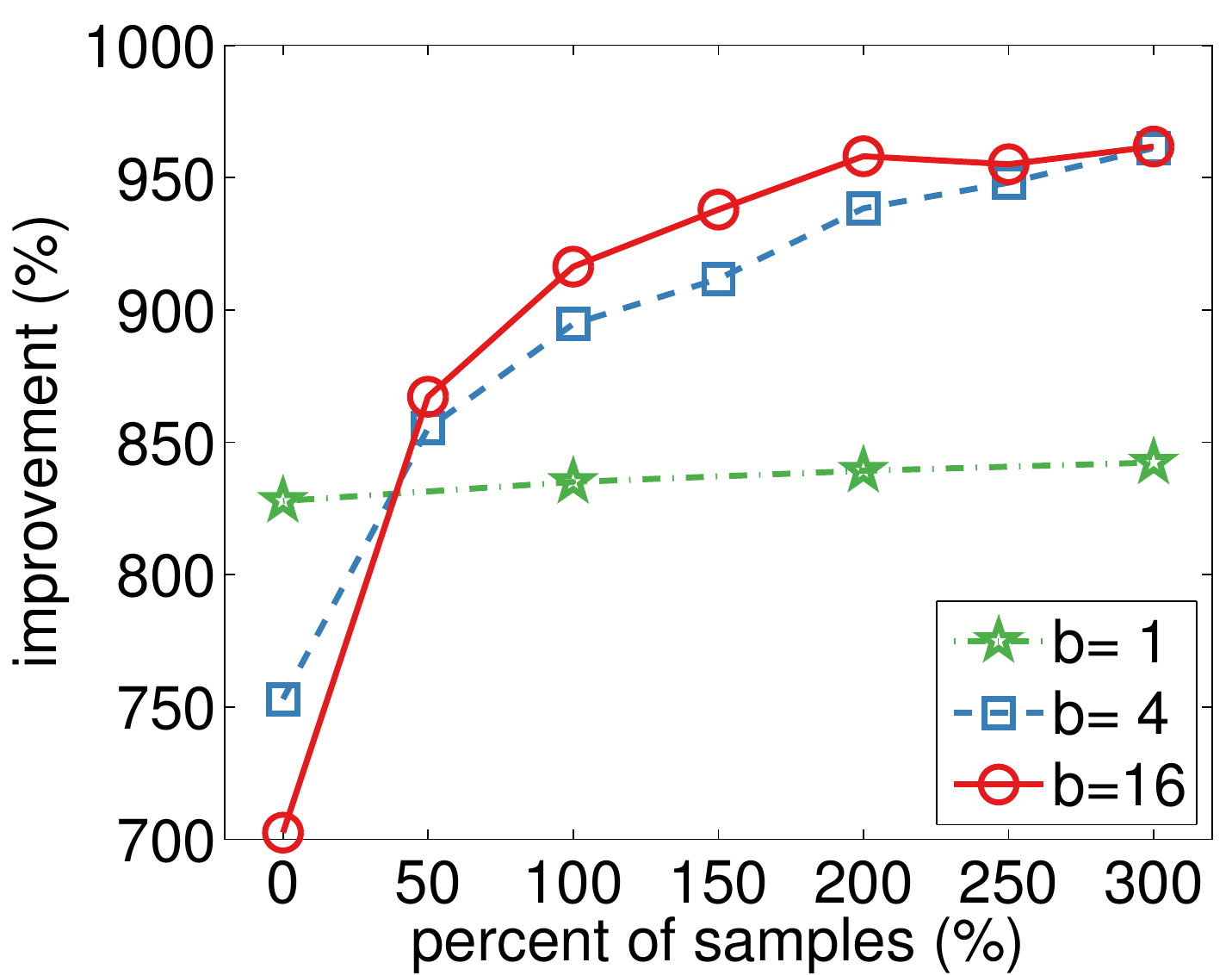}
    \includegraphics[width=0.49\textwidth] {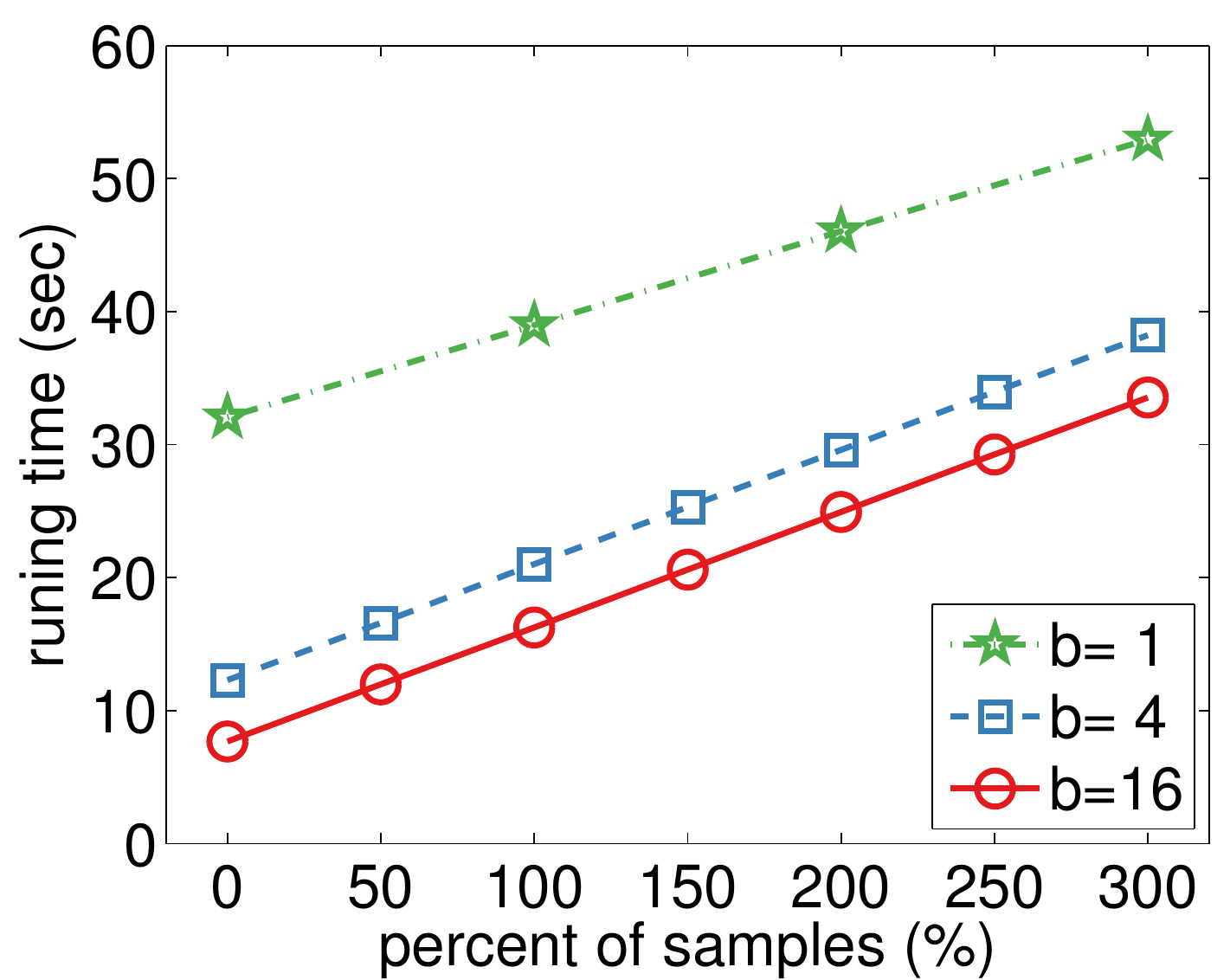} \\
    \includegraphics[width=0.49\textwidth] {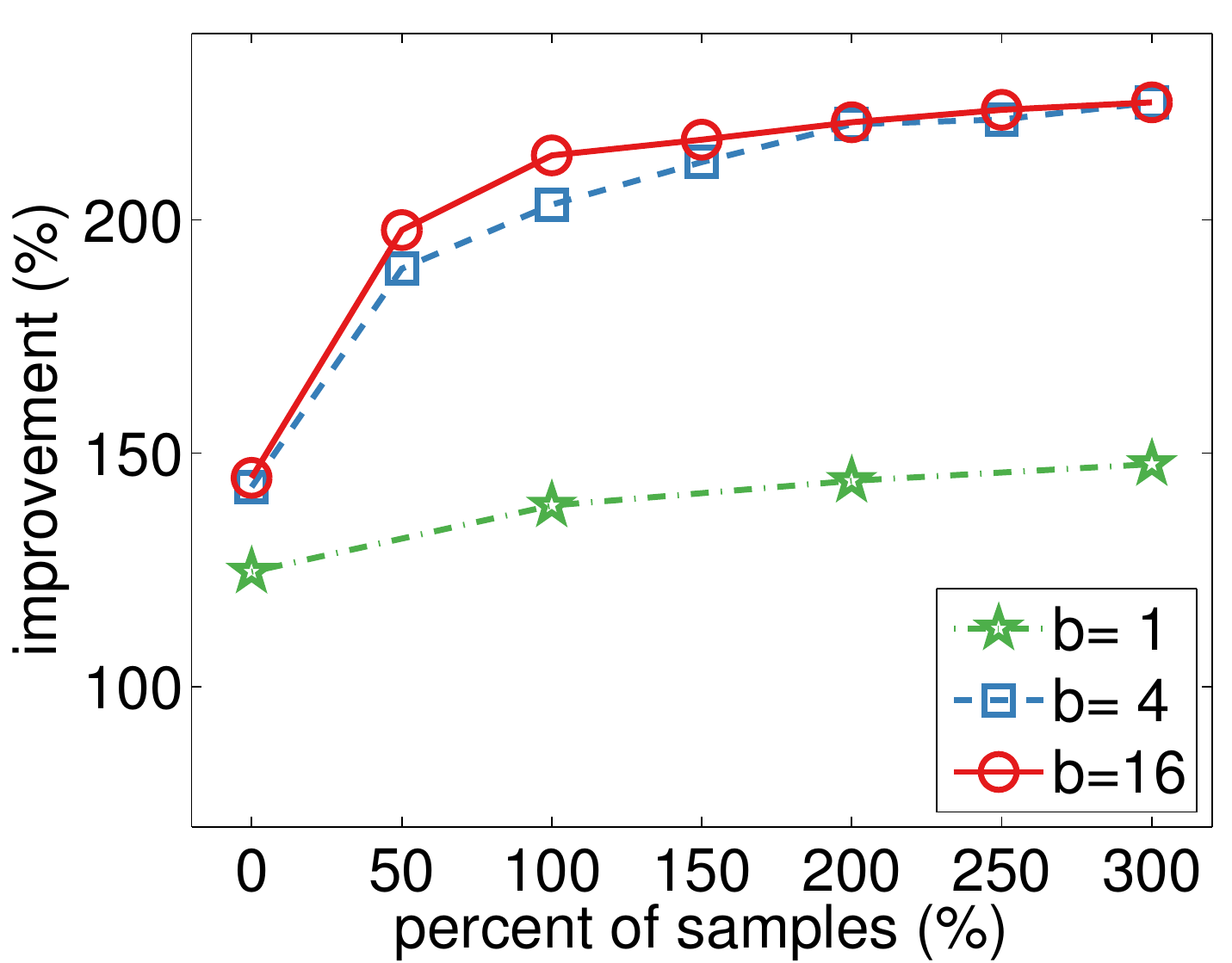}
    \includegraphics[width=0.49\textwidth] {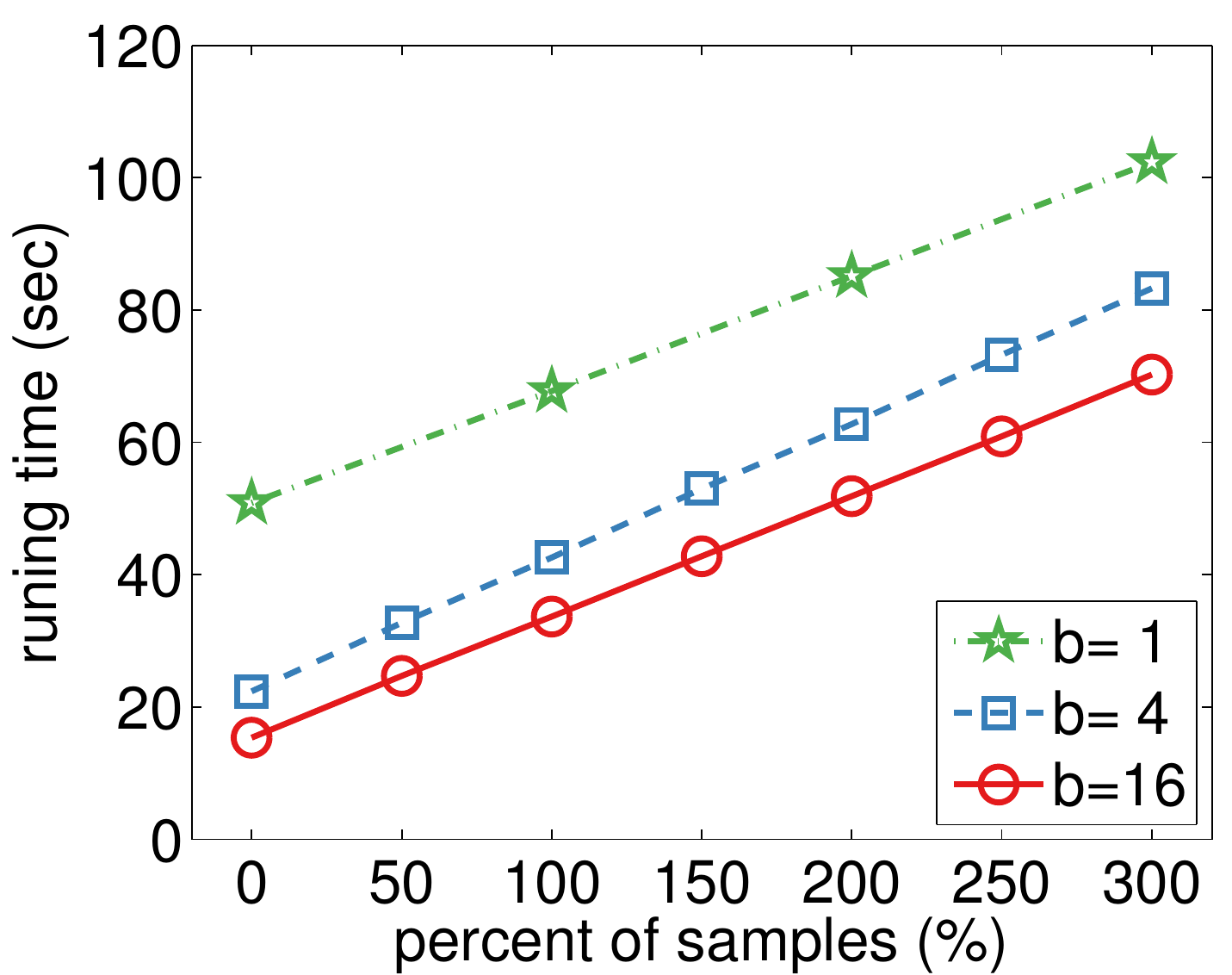}
  \caption{Varying the number of subgraphs and percent of data used for
    initialization for single thread partitioning. Top: \ctra; Bottom:
    \lj.
\label{fig:init}}
\end{figure*}

We examine two important optimizations in Parsa: the size of
subgraph and the neighbor set initialization. We first consider the single thread
case, which  starts with empty neighbor sets. We divide the data into different
numbers of subgraphs $b$ and also varying the number of subgraphs $a$ used for
initialization.  The results on representative text dataset CTRa and social
network live-journal are shown in Figure~\ref{fig:init}.

The x-axis plots
$a/b\times 100\%$, which is the percent of data used for constructing the
initialization. It is clear that using
more data for initialization improves partition quality.  Although the
improvement is not significant for the single subgraph case ($b=1$) because
partitioning the same subgraph several times changes the results little, there
is a stable 20\% improvement over no initialization when at least 100\% data are
used for $b > 1$.

Without initialization, using small subgraphs has a positive effect on partition
results for live-journal, but not for CTRa. The reason, as mentioned in
Section~\ref{sec:divide-into-subgr}, is that \algo prefers to assign vertices
with small degrees first when starting with empty neighbor sets.  Those vertices
offer little or no benefit for the subsequent assignment.  Live-journal has many
more sparse vertices than CTRa due to the power law distribution, and
partitioning small subgraphs reduces the number of sparse vertices entering
partitions too early.

Initialization solves the previous problem by dropping early partition results
and resetting corresponding neighbor sets.  With $a >
\frac{1}{2} b$ in Figure~\ref{fig:init}, small subgraphs improve the partition
results on both CTRa and live-journal.  This occurs since when using the same
percentage of data for initialization, neighbor sets with small $b$
are reset more often.

Figure~\ref{fig:init} also shows the runtime.  Splitting into more blocks
(larger $b$) narrows the search range for adding vertices, which reduces the
cost of operating the doubly-linked list, boosting speed.  The runtime increases
linearly as we use and discard more samples for initialization, but the
partition quality benefits of doing so appear worthwhile up to performing two
passes (100\% samples).

Next we consider the parallel case with non-empty starting neighbor sets. We use
4 workers to partition a subset of \ctrb containing 1 billion of edges and use
one worker to partition an even smaller subgraph to obtain the
starting neighbor sets. The results of varying the size of the this subgraph are shown in
Figure~\ref{fig:vary_par_init}.

As can be seen, the partition quality is
significantly improved even when only 0.1\% data are used for the global
initialization. In addition, although this initialization takes extra time, the
total running time is minimized when we used initialization. A good
initialization of the neighbor sets reduces the cost of operating the doubly
linked lists, saving time.

\begin{figure}[tbh]
  \includegraphics[width=\columnwidth]{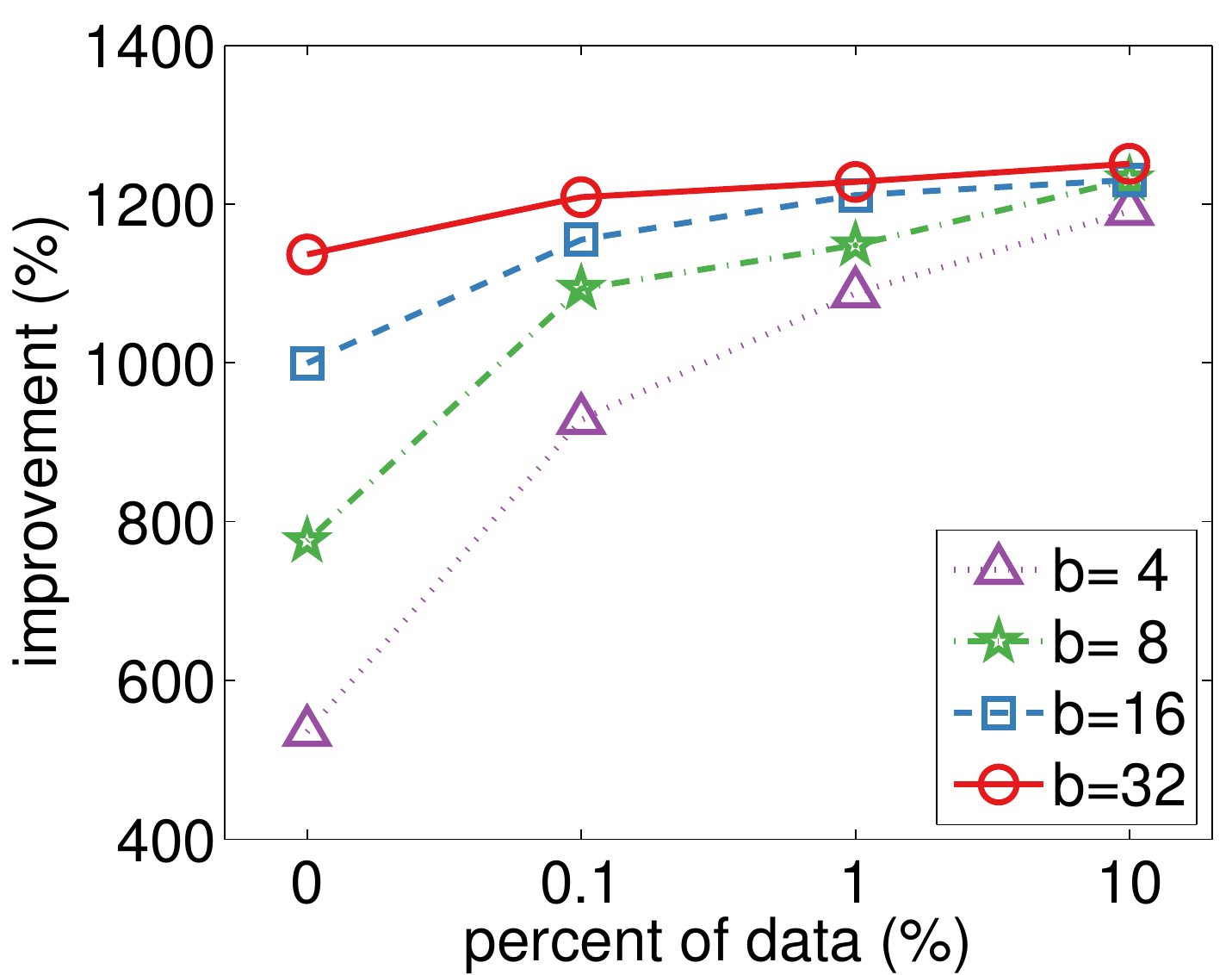}
  \includegraphics[width=\columnwidth]{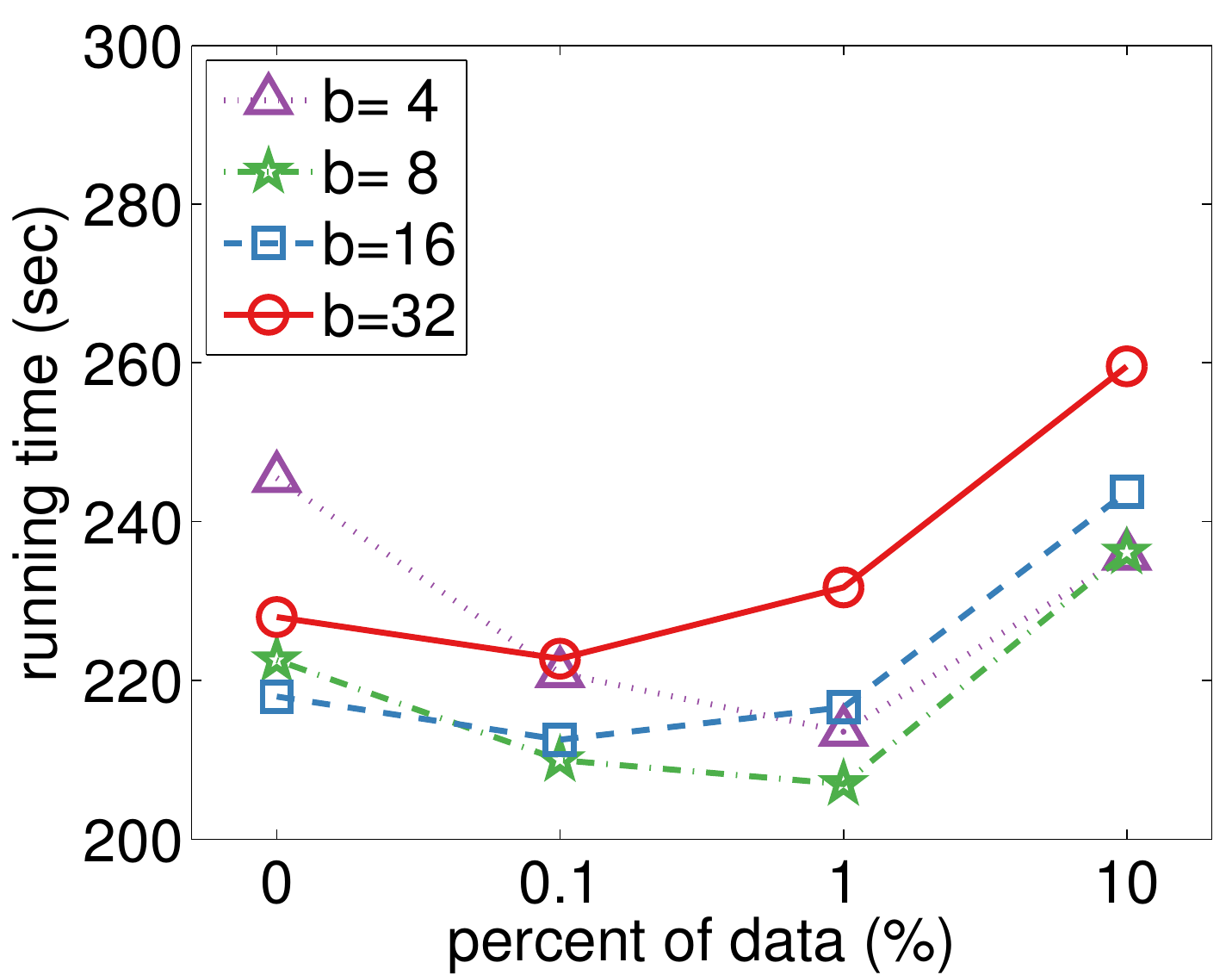}
  \caption{Varying the percentage of data used for global initializing the
    neighborhood sets with 4 workers.}
  \label{fig:vary_par_init}
\end{figure}

\subsection{Scalability}

\begin{figure}[t!]
  \centering
  \noindent%
  \includegraphics[width=\columnwidth]{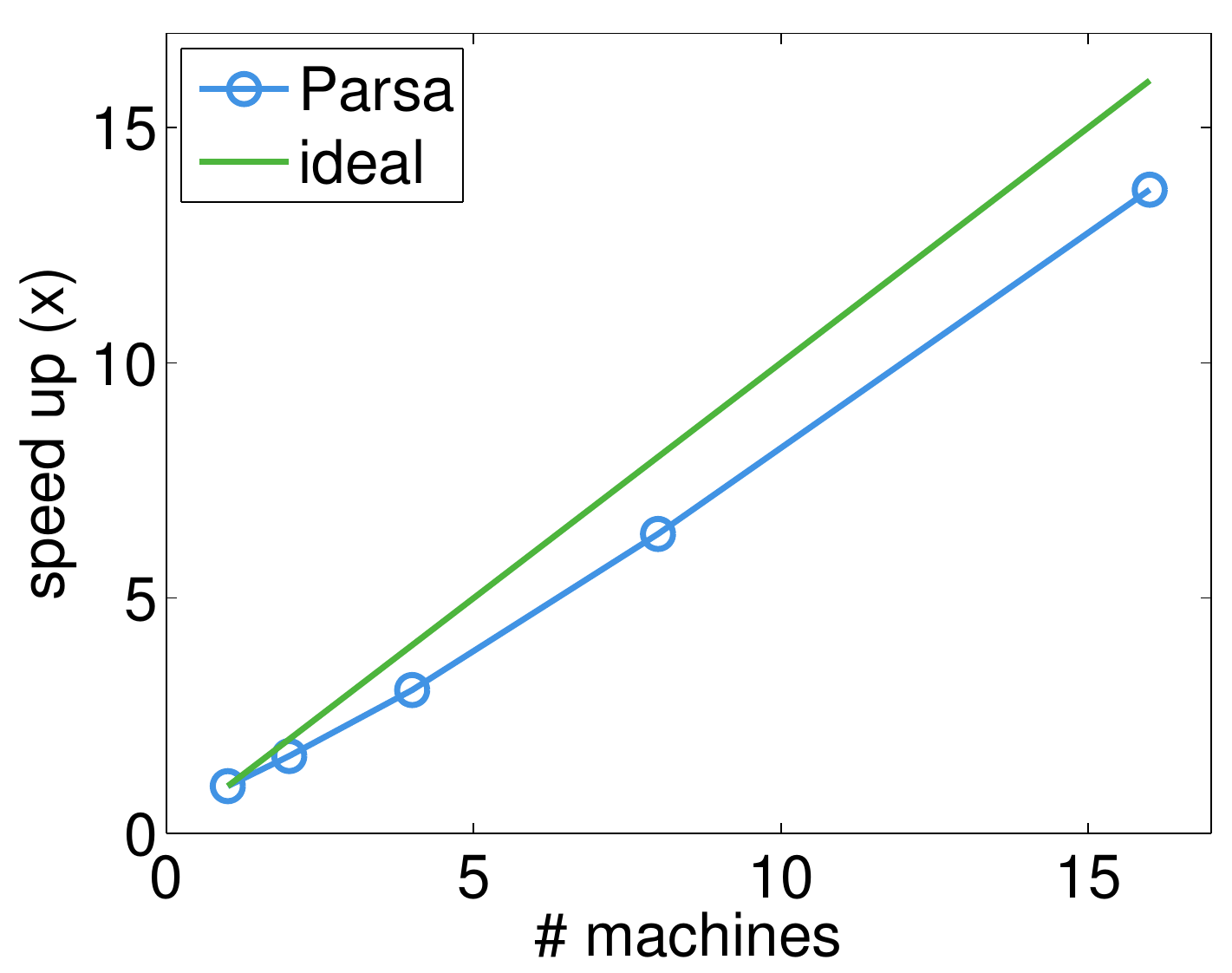}
  \caption{The scalability of Parsa on dataset \ctrb.  }
    \label{fig:scal}
\end{figure}

We test the scalability of Parsa on \ctrb with 10 billion edges by
increasing the number of machines. We run 4 workers and 4 servers at
each machine with infinite maximum delay. The results are shown in
Figure~\ref{fig:traffic_vs_datasize}. As can be seen, the speedup is
linear with the number of machines and close to the ideal case. In
particular, we obtained a 13.7x speedup by increasing the number of
machines from 1 to 16.

The main reason that Parsa scales well is due to the eventual consistency
model ($\tau=\infty$). In this model, there is no global barrier between
workers, and each worker even does not wait the previous results pushed
successfully. Therefore, workers fully utilize the
computational resource and network bandwidth, and waste no time on waiting
the data synchronization.

This consistency model, however, potentially leads to inconsistency of the
neighbor sets between workers. However, we found that Parsa is robust to this
kind of inconsistency. In our experiment, increasing the number of machines from
1 to 16 (4 to 64 in terms of workers) only decreases the quality of the
partition result at most by $5\%$. We believe the reason is twofold. First, the
starting neighbor sets obtained on a small subgraph let all workers have a
consistent initialization, which may contain the membership of most head (large
degree) vertices in $V$. Second, the modifications of the neighbor sets each
worker contributed after partitioning a subgraph therefore are mainly about tail
(small degree) vertices in $V$. Due to the extreme sparsity of the tail
vertices, the conflicts among workers could be small and therefore affect the
results little.

\subsection{Accelerating Distributed Inference}

Finally we examine how much Parsa can accelerate distributed machine learning
applications by better data and parameter placement. We consider
$\ell_1$-regularized logistic regression, which is one of the most widely
used machine learning algorithm for large scale text datasets.
We choose a
state-of-the-art distributed inference algorithm,
DBPG~\cite{LiAndSmoYu14}, to solve this application.  It is based on the block
proximal gradient method using several techniques to improve
efficiency: it supports a maximal $\tau$-delay consistency model similar to
Parsa, and uses several user-defined filters, such as key caching, value
compression, and an algorithm-specific KKT filter, to further reduce
communication cost.  This algorithm has been implemented in the
\ps~\cite{LiAndParSmoetal14}, and is well optimized. It can use
1,000 machines to train $\ell_1$-regularized logistic regression on
500 terabytes data within a hour~\cite{LiAndSmoYu14}.

We run DBPG on \ctrb using 16 machines as the baseline. We enabled all
optimization options described in~\cite{LiAndSmoYu14}.
Then we partition \ctrb
into 16 parts by Parsa and run DBPG again. The runtime is shown in
Table~\ref{tab:batch}. By random partitioning, DBPG stops after passing the
data 45 times and uses 1.43 hours. On the other hand, Parsa uses 4 minutes to
partition the data and then accelerates DBPG to 0.84 hour. As a
result, Parsa can reduce the total time from 1.43 hours to 0.91 hourx,
a 1.6x speedup.

The reason Parsa accelerates DBPG is shown clearly in
Table~\ref{tab:batch}. By random partition, only 6\% of network traffic between
servers and workers happens locally. Even though the prior work
reported very low communication cost for DBPG~\cite{LiAndSmoYu14},
we observe that a significant amount of
time was spent on data synchronization. The reason is twofold. First,
\cite{LiAndSmoYu14} pre-processed the data to remove tail features (tail
vertices in $V$) before training. But we fed the raw data into the algorithm and
let the $\ell_1$-regularizer do the feature selection automatically, which often
yields a better machine learning model but induces more network traffic. Second,
the network bandwidth of the university cluster we used is 20 times less than
the industrial data-center used by \cite{LiAndSmoYu14}.  Therefore, the
communication cost can not be ignored in our experiment. However, after the
partitioning, the inter-machine communication is decreased from 4.2TB to
0.3TB. Furthermore, the ratio of inner-machine traffic increases from 6\% to
92\%. In total, inter-machine communication is decreased by more than 90\%,
which significantly speeds inference.

\begin{table}[t]
  \centering
  \begin{tabular}{|l|rr|r|}
    \hline
    method & partition & inference & total \\
    \hline
    random & 0h         & 1.43h      & 1.43h  \\
    Parsa  & 0.07h      & 0.84h      & 0.91h  \\
    \hline
  \end{tabular}
  \caption{Time for $\ell_1$-regularized logistic
    regression on CTRb on 16
    machines requiring 45 data passes. }
  \label{tab:batch}
  \smallskip
  \begin{tabular}{|l|rr|r|}
    \hline
    method & inner-machine & inter-machine & total \\
    \hline
    random & 0.27           & 4.23          & 4.51  \\
    Parsa  & 3.68           & 0.32          & 4.00  \\
    \hline
  \end{tabular}
  \caption{Total data (TB) sent during inference.}
  \label{tab:batch_comm}
\end{table}

\section{Conclusion}
\label{sec:summary}

This paper presented a new parallel vertex-cut graph partition
algorithm, \algo, to solve the data and parameter placement problem. Our contributions are the following:
\begin{itemize}
\item We give theoretical analysis and approximation guarantees for
  both decomposition stages of what is generally an NP hard problem.
\item We show that the algorithm can be implemented very
  efficiently by judicious use of a doubly-linked list in $O(k|E|)$
  time.
\item We provide technologies such as sampling, initialization, and
  parallelizaiton, to improve the speed and partition quality.
\item Experiments show that \algo works well in practice,
  beating (or matching) all competing algorithms in \emph{both} memory
  footprint and communication cost while also offering very fast runtime.
\item We used \algo to accelerate a stat-of-the-art distributed solver for
  $\ell_1$-regularized logistic regression implemented in \ps. We observed a
  1.6x speedup on 16 machines with a dataset containing 10 billion nonzero
  entries.
\end{itemize}
In summary, \algo is a fast, relatively simple, highly scalable and
well performing algorithm.

\medskip

\noindent
{\bfseries Acknowledgments:} We thank Stefanie Jegelka and Christos
Faloutsos  for inspiring discussions.

\bibliographystyle{abbrv}

\end{document}